\newcommand{\defparproblem}[4]{
	\vspace{3mm}
	\noindent\fbox{
		\begin{minipage}{0.96\linewidth}
			\begin{tabular*}{\linewidth}{@{\extracolsep{\fill}}lr} \textsc{#1} & {\bf{Parameter:}} #3 \\ \end{tabular*}
			{\bf{Input:}} #2 \\
			{\bf{Task:}} #4
		\end{minipage}
	}
	\vspace{2mm}
}
\newcommand{\br}[1]{\left(#1\right)}
\newcommand{\ex}[1]{\mathbb{E}\left[#1\right]}
\newcommand{\prob}[1]{\mathbb{P}\left[#1\right]}
\newcommand{\Oh}{\mathcal{O}}
\newcommand{\sub}{\subseteq}
\newcommand{\rr}{\ensuremath{\mathbb{R}}}
\newcommand{\nn}{\ensuremath{\mathbb{N}}}
\newcommand{\children}{\mathsf{Children}}
\newcommand{\fr}{\mathsf{Frac}}
\newcommand{\leaves}{\mathsf{Leaves}}
\newcommand{\schemes}{\mathsf{Schemes}}
\newcommand{\dagdp}{\textsc{Dag Disjoint Paths}\xspace}
\title{Constant Approximating Disjoint Paths on Acyclic Digraphs is W[1]-hard}
\author{Michał Włodarczyk}{University of Warsaw}{michal.wloda@gmail.com}{https://orcid.org/0000-0003-0968-8414}{Supported by Polish National Science Centre SONATA-19 grant number 2023/51/D/ST6/00155.}
\authorrunning{M. Włodarczyk} 
\keywords{fixed-parameter tractability, hardness of approximation, disjoint paths} 
\begin{document}

\maketitle

\begin{abstract}
    In the {\sc Disjoint Paths} problem, one is given a 
    graph with a set of $k$ vertex pairs $(s_i,t_i)$ and the task is to connect each $s_i$ to $t_i$ with a path, so that the $k$ paths are pairwise disjoint.
    In the optimization variant, {\sc Max Disjoint Paths}, the goal is to maximize the number of vertex pairs to be connected.
    We study this problem on acyclic directed graphs, where {\sc Disjoint Paths} is known to be W[1]-hard when parameterized by $k$.
    We show that in this setting {\sc Max Disjoint Paths} is W[1]-hard to $c$-approximate for any constant $c$.
    To the best of our knowledge, this is the first non-trivial result regarding the parameterized approximation for {\sc Max Disjoint Paths} with respect to the natural parameter $k$.
    Our proof is based on an elementary self-reduction that is guided by a~certain combinatorial object constructed by the probabilistic method.
\end{abstract}

\section{Introduction}

The {\sc Disjoint Paths} problem has attracted a lot of attention both from the perspective of graph theory and applications~\cite{frank1990packing,ogier1993distributed,schrijver2003combinatorial,srinivas2005finding}.
Both decision variants, where one requires the paths to be either vertex-disjoint or edge-disjoint, are known to be NP-hard already on very simple graph classes~\cite{heggernes2015finding,kramer1984complexity,lynch1975equivalence,natarajan1996disjoint}.
This has motivated the study of {\sc Disjoint Paths} through the lens of parameterized complexity. Here, the aim is to develop algorithms with a running time of the form $f(k)\cdot n^{\Oh(1)}$, where $f$ is some computable function of a parameter $k$ and $n$ is the input size.
A~problem admitting such an algorithm is called {\em fixed-parameter tractable} (FPT).
In our setting, $k$ is the number of vertex pairs to be connected.
On undirected graphs, both variants of {\sc Disjoint Paths} have been classified as FPT thanks to the famous Graph Minors project by Robertson and Seymour~\cite{robertson1995graph} (see \cite{kawarabayashi2012disjoint, korhonen2024minor} for later improvements).
This was followed by a~line of research devoted to designing faster FPT algorithms on planar graphs~\cite{AdlerKKLST17, Cho0O23, LokshtanovMPSZ20, reed1995rooted, WlodarczykZ23}.

On directed graphs, there is a simple polynomial transformation between the vertex-disjoint and the edge-disjoint variants, so these two problems turn out equivalent.
Here, the problem becomes significantly harder: It is already NP-hard for $k=2$~\cite{fortune1980directed}.
The situation is slightly better for acyclic digraphs (DAGs) where {\sc Disjoint Paths} can be solved in time $n^{\Oh(k)}$~\cite{fortune1980directed} but it is W[1]-hard~\cite{Slivkins10} (cf.~\cite{amiri2016routing}) hence unlikely to be FPT.
In addition, no $n^{o(k)}$-time algorithm exists under the assumption of the Exponential Time Hypothesis (ETH)~\cite{Chitnis23}.
Very recently, it has been announced that {\sc Disjoint Paths} is FPT on Eulerian digraphs~\cite{cavallaro2024edge}.
It is also noteworthy that the vertex-disjoint and edge-disjoint variants are not equivalent on planar digraphs as the aforementioned reduction does not preserve planarity.
Indeed, here the vertex-disjoint version is FPT~\cite{DBLP:conf/focs/CyganMPP13} whereas the edge-disjoint version is W[1]-hard~\cite{Chitnis23}.

In the optimization variant, called {\sc Max Disjoint Paths}, we want to maximize the number of terminals pairs connected by disjoint paths. 
The approximation status of this problem has been studied on various graph classes~\cite{chekuri2004edge, chekuri2006edge, chuzhoy2016improved, chuzhoy2017new, EneMPR16, kleinberg1995approximations, kolliopoulos2004approximating}. 
On acyclic digraphs the best approximation factor is $\Oh(\sqrt n)$~\cite{chekuri2006} and this cannot be improved unless P=NP~\cite{chalermsook2014pre}.
A different relaxation is to allow the algorithm to output a solution in which every vertex appears in at most $c$ paths (or to conclude that there is no vertex-disjoint solution).
Kawarabayashi, Kobayashi, and Kreutze~\cite{kawarabayashi2014excluded} used the directed half-integral grid theorem  to design
a polynomial-time algorithm for directed {\sc Disjoint Paths} with congestion $c=4$ for every~$k$.
In other words, such a relaxed problem belongs to the class XP.
Subsequently, the congestion factor has been improved to $c=3$~\cite{kawarabayashi2015grid} and $c=2$~\cite{GiannopoulouKKK22}.

\subparagraph{Hardness of FPT approximation.}
For problems that are hard from the perspective of both approximation and FPT algorithms,
it is natural to exploit the combined power of both paradigms and consider FPT approximation algorithms.
Some prominent examples are an FPT approximation scheme for $k$-{\sc Cut}~\cite{Lokshtanov0S20} and an FPT 2-approximation for {\sc Directed Odd Cycle Transversal}~\cite{doct} parameterized by the solution size $k$. 
However, several important problems proved to be resistant to FPT approximation as well.
The first hardness results in this paradigm have been obtained under a relatively strong hypothesis, called Gap-ETH~\cite{gap-eth}.
Subsequently, an $\Oh(1)$-approximation for $k$-{\sc Clique} was shown to be W[1]-hard~\cite{Lin21} and later the hardness bar was raised to $k^{o(1)}$~\cite{SK22}.
In turn, $k$-{\sc Dominating Set} is W[1]-hard to $f(k)$-approximate for any function~$f$~\cite{dominating-set} and W[2]-hard to $\Oh(1)$-approximate~\cite{LinRSW23}.
More results are discussed in the survey~\cite{feldmann2020survey}.

{\bf Proving approximation hardness under Gap-ETH is easier compared to the assumption FPT$\ne$W[1] because Gap-ETH already assumes hardness of a problem with a {\em gap}}.
Indeed, relying just on FPT$\ne$W[1] requires the reduction to perform some kind of {\em gap amplification}, alike in the PCP theorem~\cite{Dinur07}.
Very recently, the so-called {\em Parameterized Inapproximability Hypothesis} (PIH) has been proven to follow from ETH~\cite{guruswami2023parameterized}.
This means that ETH implies FPT approximation hardness of {\sc Max 2-CSP} parameterized by the number of variables within some constant approximation factor $c > 1$, which has been previously used as a starting point for parameterized reductions~\cite{bhattacharyya2021parameterized, GuruswamiRS23, lokshtanov2020parameterized, ohsaka2024parameterized}.
It remains open whether PIH can be derived from the weaker assumption FPT$\ne$W[1].

Lampis and Vasilakis~\cite{lampis2024parameterized} showed that undirected {\sc Max Vertex-Disjoint Paths} admits an FPT approximation scheme when parameterized by {\em treedepth} but, assuming PIH, this is not possible under parameterization by {\em pathwdith}.
See~\cite{ChekuriKS09, EneMPR16} for more results on approximation for {\sc Max Disjoint Paths} under structural parameterizations. 
Bentert, Fomin, and Golovach~\cite{bentert2024tight}
considered the {\sc Max Vertex-Disjoint Shortest Paths} problem where we additionaly require each path in a solution to be a shortest path between its endpoints.
They ruled out FPT($k$) approximation with factor $k^{o(1)}$ for this problem assuming FPT$\ne$W[1] and with factor $o(k)$ assuming Gap-ETH.

\subparagraph{Our contribution.}
We extend the result by Slivkins~\cite{Slivkins10} by showing that \textsc{Max Disjoint Paths} on acyclic digraphs does not admit an FPT algorithm that is a $q$-approximation, for any constant $q$.
We formulate our hardness result as W[1]-hardness of the task of distinguishing between instances that are fully solvable from those in which less than a $\frac 1 q$-fraction of the requests can be served at once.
Since a $q$-approximation algorithm could be used to tell these two scenarios apart, the following result implies hardness of approximation.
We refer to a pair $(s_i,t_i)$ as a {\em request} that should be {\em served} by a path connecting $s_i$ to $t_i$.

\begin{restatable}{theorem}{thmMain}
\label{thm:main}
    Let $q \in \nn$ be a constant.
    It is W[1]-hard to distinguish whether for a given instance of $k$-\dagdp:
    \begin{enumerate}
        \item all the requests can be served simultaneously, or
        \item no set of $k/q$ requests can be served simultaneously.
    \end{enumerate}
\end{restatable}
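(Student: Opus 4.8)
The plan is to prove \cref{thm:main} by a gap-producing self-reduction from the (gapless) W[1]-hardness of \dagdp established by Slivkins~\cite{Slivkins10}. Starting from an instance $I$ of $k$-\dagdp, I would construct, in FPT time, an instance $J$ of $k'$-\dagdp with $k'$ bounded by a function of $k$ and $q$, so that: if all $k$ requests of $I$ can be served simultaneously then all $k'$ requests of $J$ can be served, whereas if not all requests of $I$ can be served then fewer than $k'/q$ requests of $J$ can be. The instance $J$ is built by plugging many vertex-disjoint copies of $I$ into a rigid combinatorial scaffold, and the scaffold is exactly what the probabilistic argument supplies.

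The scaffold is a set system on a ground set $V$, which will index the requests of $J$: a family of ``blocks'', each a $k$-subset of $V$ carrying a bijection onto the $k$ terminal pairs of $I$, organised into a bounded number $L=f(k,q)$ of equipartitions of $V$ (so that the copies of $I$ can be laid out layer by layer). Its defining property is a covering condition: every subset of $V$ of size at least $\lceil |V|/q\rceil$ contains some block in full. I would obtain such a system by drawing the equipartitions at random and applying a union bound: for a random collection of suitably many (yet still only $f(k,q)$-many) partitions, the probability that a fixed $\lceil|V|/q\rceil$-subset avoids every block is exponentially small in $|V|$, which beats the $2^{|V|}$ candidate subsets, so a good scaffold exists; since its total size depends only on $k$ and $q$, it can then be located by exhaustive search within the FPT budget. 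The covering property alone could in fact be met by an explicit, highly symmetric scaffold; the reason to insist on a random — hence ``generic'' — one is an extra robustness property needed in the soundness analysis below.

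Given the scaffold, $J$ is assembled as follows. The requests of $J$ are the elements of $V$; for each block $b$ there is a private copy $I_b$ of $I$, and the request indexed by the $i$-th element of $b$ is meant to be routed through $I_b$ along the path of its $i$-th terminal pair. Consecutive copies on each request's itinerary are joined by single edges between the corresponding port vertices, the copies are placed layer by layer in a fixed linear order so that $J$ is acyclic, and fresh source/sink vertices are attached at the two ends. For completeness: if $I$ is fully solvable, route every copy $I_b$ fully; as the copies are vertex-disjoint and the connecting edges are private to individual requests, this serves all $|V|=k'$ requests of $J$.

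Soundness is the crux, and the main obstacle. Suppose some set $R\subseteq V$ of requests is served in $J$ with $|R|\ge |V|/q$. The covering property yields a block $b\subseteq R$, and tracing the $k$ paths that serve $b$'s requests I would like to conclude that they route all $k$ terminal pairs of $I_b$ simultaneously, contradicting the unsolvability of $I$. The difficulty is that a solution in $J$ need not respect the intended wiring: a path may leave a copy through the ``wrong'' port vertex and thereafter impersonate another request, so a block could in principle be traversed by $k$ paths realising a permutation of its ports other than the identity, which need not contradict anything about $I$. Ruling this out is precisely where genericity of the scaffold is used: one shows that in a random scaffold no large set of requests admits a globally consistent pattern of such identity swaps that returns every path to its own endpoint, forcing some block to be served honestly and in the intended pairing after all. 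Making this rigorous — and folding the corresponding robustness requirement into the probabilistic construction of the scaffold — is the technical heart; everything else is routine bookkeeping on a DAG.
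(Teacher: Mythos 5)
Your overall strategy---a gap-amplifying self-reduction that chains many disjoint copies of $I$ according to a covering design obtained by the probabilistic method plus a union bound over all large request subsets---is exactly the paper's, and both the completeness direction and the existence of a scaffold with $L=f(k,q)$ layers are unproblematic (the paper proves the analogous covering statement via negative association of randomly permuted indicators, \Cref{lem:scheme:assosiation}). The genuine gap is the one you flag yourself and then defer: soundness. In your flat construction, with fresh sources and sinks at the two ends and with the terminal-pair index of a request changing arbitrarily from one layer's block to the next's, a serving path really can exit a copy through the wrong sink port, impersonate another request for several layers, and return to its own endpoint; whether such swaps are feasible depends on which $(s_i,t_{i'})$ connections exist \emph{inside} $I$ and on how paths inside a copy can share vertices---information the scaffold cannot see. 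So ``genericity'' of a random scaffold is not a plausible repair: the adversary's freedom lives in $I$, not in the wiring, and no amount of randomness in the partitions rules it out. As written, the technical heart of the soundness proof is missing.

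The paper closes this hole not by a robustness argument but by designing the wiring so that impersonation is structurally impossible. The copies are arranged along a full $k$-ary tree of depth $d$ (\Cref{def:game:instance}): a request is indexed by a leaf $v$, its source sits in a layer-$1$ copy, its sink sits in one specific last-layer copy determined by composing the bijections along the root-to-leaf path, and---crucially---the $i$-th subinstance below the root connects only to the entry port $s_i$ of last-layer copies. A path originating there can therefore only ever use terminal pair $i$ of whatever last-layer copy it enters, and since last-layer copies are pairwise unreachable, it must enter the unique copy containing its own sink through the single designated arc. Applied recursively, this forces every serving path to follow its intended itinerary exactly (\Cref{lem:game:no}), after which a ``collision''---$k$ served requests all funnelled into one copy on $k$ distinct terminal pairs---directly contradicts $I$ being a no-instance. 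In short: the identity of a request must be pinned down by \emph{where its sink is placed}, not by fresh endpoint vertices, and the terminal-pair index a request uses must be constant along its itinerary; achieving both is essentially what forces the hierarchical tree structure in place of independent flat equipartitions. Without that (or an actual proof of your anti-impersonation claim), the reduction does not go through.
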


\vspace{0.3cm}
Our proof is elementary and
does not rely on coding theory or communication complexity as some previous W[1]-hardness of approximation proofs~\cite{dominating-set, Lin21}. 
Instead, we give a gap-amplifying self-reduction that is guided by a certain combinatorial object constructed via the probabilistic method.

\subparagraph{Techniques.}
A similar parameterized gap amplification technique has been previously applied to the \textsc{$k$-Steiner Orientation} problem: given a graph $G$ with both directed and undirected edges, together with a set of vertex pairs $(s_1,t_1), \dots, (s_k,t_k)$, we want to orient all the undirected edges in $G$ to maximize the number of pairs $(s_i,t_i)$ for which $t_i$ is reachable from $s_i$.
The problem is W[1]-hard and the gap amplification technique can be used to establish W[1]-hardness of constant approximation~\cite{steiner-orient}.
The idea is to create multiple copies of the original instance and connect them sequentially into many layers, in such a way that the fraction of satisfiable requests decreases as the number of layers grows.
What distinguishes \textsc{$k$-Steiner Orientation} from our setting though is that therein we do not require the $(s_i,t_i)$-paths to be disjoint.
So it is allowed to make multiple copies of each request $(s_i,t_i)$ and connect the $t_i$-vertices to the $s_i$-vertices in the next layer in one-to-many fashion.
Such a construction obviously cannot work for \textsc{Dag Disjoint Paths}.
Instead, will we construct a combinatorial object yielding a scheme of connections between the copies of the original instance, with just one-to-one relation between the terminals from the consecutive layers.

Imagine a following construction: given an instance $I$ of $k$-\dagdp we create $2k$ copies of $I$: $I^1_1, \dots, I^1_k$ and  $I^2_1, \dots, I^2_k$.
Next, for each $i \in [k]$ we choose some permutation $\pi_i \colon [k] \to [k]$ and for each $j \in [k]$ we connect the sink $t_j$ in $I^1_i$ to the source $s_i$ in $I^2_{\pi_i(j)}$.
See \Cref{fig:construction} on page~\pageref{fig:construction}.
Then for each $(i,j) \in [k]^2$ we request a path from the source  $s_j$ in $I^1_i$ to the sink $t_i$ in $I^2_{\pi_i(j)}$.
Observe that if $I$ is a yes-instance then we can still serve all the requests in the new instance.
\textbf{However, when $I$ is a no-instance, then there is a family $\mathcal{F}$ of $2k$ many $k$-tuples from $[k]^2$ so that each tuple represents $k$ requests that cannot be served simultaneously.}
Each tuple corresponds to some $k$ requests that have to be routed through a single copy of $I$, which is impossible when $I$ is a no-instance.

We can now iterate this argument.
In the next step we repeat this construction $k$ times (but possibly with different permutations), place such $k$ instances next to each other, and create the third layer comprising now $k^2$ copies of $I$.
Then for each $i \in [k]$ we need a permutation $\pi_i \colon [k^2] \to [k^2]$ describing the connections between the sinks from the second layer to the sources from the third layer.
Again, if $I$ is a no-instance, we obtain a family $\mathcal{F}$ of $3k^2$ many $k$-tuples from $[k]^3$
corresponding to subsets of requests that cannot be served simultaneously.
We want to show that after $d = f(k)$ many iterations no subset $A$ of 50$\%$ requests can be served.
In other words, 
the family $\mathcal{F}$ should always contain a tuple contained in $A$, certifying that $A$ is not realizable.
This will give a reduction from the exact version of \dagdp to a version of \dagdp with gap~$\frac 1 2$.
The crux of the proof is to find a collection of permutations that will guarantee the desired property~of~$\mathcal{F}$.

It is convenient to think about this construction as a game in which the first player chooses the permutations governing the connections between the layers (thus creating an instance of \dagdp) and the second player picks a subset $A$ of 50$\%$ requests.
The first player wins whenever the family $\mathcal{F}$ of forbidden $k$-tuples includes a tuple contained in $A$.
We need to show that the first player has a single winning strategy against every possible strategy of the second player.
We will prove that a good strategy for the first player is to choose every permutation independently and uniformly at random.
In fact, for a sufficiently large $d$ and any fixed strategy $A$ of the second player, the probability that $A$ wins against a randomized strategy is smaller than $2^{-k^d}$.
Since the number of possible strategies for the second player is at most $2^{k^d}$ (because there are $k^d$ requests), \textbf{the union bound implies that the first player has a positive probability of choosing a strategy that guarantees a victory against every strategy of the second player.}
This translates to the existence of a family of permutations for which the gap amplification works.

\section{Preliminaries}
We follow the convention $[n] = \{1,2,\dots,n\}$ and use the standard graph theoretic terminology from Diestel's book~\cite{diestel-book}. 
We begin by formalizing the problem.

\defparproblem{Max Disjoint Paths}{A digraph $D$, a set $\mathcal{T}$ of $k$ pairs $(s_i,t_i) \in V(D)^2$.}{k}{Find a largest collection $\mathcal{P}$ of vertex-disjoint paths so that each path $P\in\mathcal{P}$ is an $(s_i,t_i)$-path for some $(s_i,t_i) \in \mathcal{T}$.}

We refer to the pairs from $\mathcal{T}$ as {\em requests}.
A solution $\mathcal{P}$ is said to {\em serve} request $(s_i,t_i)$ if it contains an $(s_i,t_i)$-path.
The condition of vertex-disjointedness implies that each request can be served by at most one path in $\mathcal{P}$.
A {\em yes-instance} is an instance admitting a solution serving all the $k$-requests.
Otherwise we deal with a {\em no-instance}.
{(\sc Max)} \dagdp is a variant of {(\sc Max)} \textsc{Disjoint Paths} where the input digraph is assumed to be acyclic.

\subparagraph{Notation for trees.}
For a rooted tree $T$ and $v \in V(T)$ we denote by $\children(v)$ the set of direct descendants of~$v$.
A vertex $v$ in a rooted tree is a leaf if $\children(v) = \emptyset$.
We refer to 
the set of leaves of $T$ as $L(T)$. 
The depth of a vertex $v \in V(T)$ is defined as its distance from the root, measured by the number of edges.
In particular, the depth of the root equals 0.
The set of vertices of depth $i$ in $T$ is called the $i$-th layer of $T$.

For $v \in V(T)$ we write $T^v$ to denote the subtree of $T$ rooted at $v$.
We can additionally specify an integer $\ell \ge 1$ and write $T^{v,\ell}$ for the tree comprising the first $\ell$  layers of $T^v$.
In particular, the tree $T^{v,1}$ contains only the vertex $v$. 

For $k,d \in \nn$ we denote by $T_{k,d}$ the full $k$-ary rooted tree of depth $d$.
We have $|L(T_{k,d})| = k^d$.
A~subset  $A \sub L(T_{k,d})$ is called a $q$-subset for $q\in \nn$ if $|A| \ge {|L(T_{k,d})|}\,/\,{q}$.

\subparagraph{Fixed parameter tractability.}

We provide only the necessary definitions here; more information can be found in the book~\cite{cygan2015parameterized}.
A parameterized problem can be formalized as a subset of $\Sigma^* \times \mathbb{N}$.
 We say that a problem is \emph{fixed parameter tractable} (FPT{}) if it admits an algorithm solving an
    instance \((I, k)\) 
    in running time
    \(f(k)\cdot |I|^{\Oh(1)}\), where \(f\) is some computable function.

    To argue that a parameterized
    problem is unlikely to be FPT{}, we employ FPT-reductions that run in time \(f(k)\cdot |I|^{\Oh(1)}\) and transform an instance $(I,k)$ into an equivalent one $(I',k')$ where $k' = g(k)$.
    A canonical parameterized problem that is believed to lie outside the class FPT is $k$-\textsc{Clique}.
    The problems that are FPT-reducible to  $k$-\textsc{Clique} form the class W[1].

\subparagraph{Negative association.}
We introduce the following concept necessary for our probabilistic argument.
There are several definitions capturing negative dependence between random variables; intuitively it means that when one variable takes a high value then a second one is more likely to take a low value.
Negative association formalizes this idea in a strong sense.

\begin{definition}
    A collection of random variables $X_1, X_2, \dots, X_n \in \rr$ is said to be {\em negatively associated} if for every pair of disjoint subsets $A_1, A_2 \sub [n]$ and every pair of increasing functions $f_1 \colon \rr^{|A_1|} \to \rr$, $f_2 \colon \rr^{|A_2|} \to \rr$ it holds that

    \[ \ex{f_1(X_i \mid i \in A_1)\cdot f_2(X_i \mid i \in A_2)} \le   \ex{f_1(X_i \mid i \in A_1)}\cdot \ex{f_2(X_i \mid i \in A_2)}. \]
\end{definition}

\noindent We make note of several important properties of negative association.

\begin{lemma}[{\cite[Prop. 3, 6, 7]{joag1983negative}}]
\label{lem:prelim:neg:properties}
    Consider a collection of random variables $X_1, X_2, \dots, X_n \in \rr$ that is {negatively associated}. Then the following properties hold.
    \begin{enumerate}
        
        \item For every family of disjoint subsets $A_1, \dots, A_k \sub [n]$ and increasing functions $f_1, \dots, f_k$, $f_i \colon \rr^{|A_i|} \to \rr$, the collection of random variables

        \[ f_1(X_i \mid i \in A_1),\, f_2(X_i \mid i \in A_2),\, \dots, f_k(X_i \mid i \in A_k)\]
        is negatively associated. \label{lem:prelim:neg:disjoint}

        \item If random variables $Y_1, \dots, Y_n$ are negatively associated and independent from $X_1, \dots, X_n$ then the collection
        $X_1, \dots, X_n, Y_1, \dots, Y_n$ is negatively associated. \label{lem:prelim:neg:union}

        \item For every sequence $(x_1, x_2, \dots, x_n)$ of real numbers we have
        \[\prob{X_i \le x_i \mid i \in [n]} \le \prod_{i=1}^n \prob{X_i \le x_i}.\]
        \label{lem:prelim:neg:orphant}
    \end{enumerate}
\end{lemma}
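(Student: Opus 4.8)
The plan is to derive all three properties directly from the definition of negative association using only elementary manipulations; no additional machinery beyond the definition is needed.

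For property~\ref{lem:prelim:neg:disjoint}, to show that the variables $g_j := f_j(X_i \mid i \in A_j)$ for $j \in [k]$ form a negatively associated collection, I would take arbitrary disjoint index sets $B_1, B_2 \sub [k]$ and arbitrary increasing functions $h_1, h_2$ on the corresponding coordinates, and observe that the composed map $(x_i \mid i \in \bigcup_{j \in B_1} A_j) \mapsto h_1\!\left(f_j(x_i \mid i \in A_j) \mid j \in B_1\right)$ is again increasing, being a monotone function of monotone functions, and similarly for $B_2$. Since the $A_j$ are pairwise disjoint, the argument sets $\bigcup_{j \in B_1} A_j$ and $\bigcup_{j \in B_2} A_j$ are disjoint, so applying the definition of negative association of $X_1, \dots, X_n$ to this pair of composed functions yields precisely the inequality required of $g_1, \dots, g_k$.

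For property~\ref{lem:prelim:neg:union}, the key device is conditioning on $Y = (Y_1, \dots, Y_n)$. Given disjoint sets $A_1, A_2$ of indices of the combined collection, split each $A_\ell$ into its $X$-part $A_\ell^X$ and its $Y$-part $A_\ell^Y$. For every fixed value of the $Y$-coordinates, $f_\ell$ restricted accordingly is an increasing function of $(X_i \mid i \in A_\ell^X)$, so negative association of the $X_i$ gives $\ex{f_1 f_2 \mid Y} \le \ex{f_1 \mid Y}\,\ex{f_2 \mid Y}$ pointwise. Because $X$ and $Y$ are independent, $\ex{f_\ell \mid Y}$ coincides with an increasing function $g_\ell$ of $(Y_i \mid i \in A_\ell^Y)$ alone. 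Taking expectations, using negative association of the $Y_i$ on $g_1, g_2$, and identifying $\ex{g_\ell} = \ex{f_\ell}$ via the tower property completes the bound.

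For property~\ref{lem:prelim:neg:orphant}, I would first record the standard observation that the defining inequality, stated for pairs of increasing functions, also holds for pairs of bounded \emph{decreasing} functions: writing a bounded decreasing $f$ as $C - \tilde f$ with $\tilde f$ increasing and expanding the product $\ex{(C - \tilde f_1)(C - \tilde f_2)}$ reduces the decreasing case to the increasing one. The orthant bound then follows by induction on $n$: applying this decreasing-function version with $A_1 = \{1\}$, $f_1(t) = \mathbf{1}[t \le x_1]$ and $A_2 = \{2, \dots, n\}$, $f_2 = \prod_{i \ge 2} \mathbf{1}[\,\cdot \le x_i\,]$ splits off the first coordinate, and one recurses on the sub-collection $X_2, \dots, X_n$, which is itself negatively associated. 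The step I expect to be most delicate is the conditioning in property~\ref{lem:prelim:neg:union}: one must verify that after fixing $Y$ the conditional expectations are genuinely monotone in the surviving $Y$-coordinates (which is exactly where independence of $X$ and $Y$ enters) and that the degenerate cases $A_\ell^X = \emptyset$ or $A_\ell^Y = \emptyset$ cause no trouble, so that negative association of the $Y_i$ may be invoked at the second stage.
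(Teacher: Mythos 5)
The paper does not prove this lemma at all --- it is imported verbatim from Joag--Dev and Proschan \cite{joag1983negative} --- so there is no in-paper argument to compare against. Your three derivations are correct and are essentially the standard proofs of Properties P6, P7, and P3/P4 in that reference: composition of increasing maps on disjoint index sets for (1), conditioning on $Y$ plus independence and the tower property for (2), and the reduction from bounded decreasing to increasing functions via $f = C - \tilde f$ followed by induction (using that a sub-collection of a negatively associated family is itself negatively associated, a special case of (1)) for (3). The only caveat worth recording is the usual integrability one: the defining inequality is applied only to bounded (indicator or truncated) functions in (3), and in (2) one should assume the relevant expectations exist, which is how the definition is standardly read.
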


\begin{lemma}\label{lem:prelim:neg:sum}
    Let $n,k \in \nn$.
    For $i \in [k]$ let $\mathcal{X}^i = (X_1^i, \dots, X_n^i)$ be a sequence of real random variables that are negatively associated.
    Suppose that $\mathcal{X}^1, \dots, \mathcal{X}^k$ are independent from each other.
    Then the random variables $(\sum_{i=1}^k X^i_1,\, \dots,\, \sum_{i=1}^k X^i_n)$ are negatively associated.
\end{lemma}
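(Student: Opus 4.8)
The plan is to realize each sum as an increasing function of a single pooled collection of all $kn$ variables, and then invoke two of the closure properties recorded in \Cref{lem:prelim:neg:properties}.

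First I would argue that the pooled family $\{X^i_j : i \in [k],\, j \in [n]\}$ is negatively associated. Proceed by induction on $k$. The case $k=1$ is the hypothesis itself. For the inductive step, the sub-family $\mathcal{X}^1 \cup \dots \cup \mathcal{X}^{k-1}$ is negatively associated by the induction hypothesis, and it is independent from $\mathcal{X}^k$ because the $k$ families are mutually independent (so any collection of measurable functions of the first $k-1$ families is independent of any measurable function of the $k$-th). Hence property~\ref{lem:prelim:neg:union} of \Cref{lem:prelim:neg:properties} yields that $\mathcal{X}^1 \cup \dots \cup \mathcal{X}^k$ is negatively associated. I would then re-index this pooled collection by the set $[k]\times[n]$.

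Next I would apply property~\ref{lem:prelim:neg:disjoint} of \Cref{lem:prelim:neg:properties} to the partition of $[k]\times[n]$ into the $n$ disjoint blocks $A_j = \{(1,j),(2,j),\dots,(k,j)\}$, $j \in [n]$, taking for each $j$ the summation map $f_j \colon \rr^k \to \rr$, $f_j(y_1,\dots,y_k) = \sum_{i=1}^k y_i$. Each $f_j$ is increasing (indeed monotone in every coordinate), so the lemma tells us that the collection $\big(f_1(X^i_1 \mid i \in [k]),\, \dots,\, f_n(X^i_n \mid i \in [k])\big)$ is negatively associated; but this is exactly $\big(\sum_{i=1}^k X^i_1,\, \dots,\, \sum_{i=1}^k X^i_n\big)$, which is the claim.

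There is no real obstacle here: the statement is a bookkeeping combination of the two quoted closure properties, and the only points deserving a line each are the base case of the induction, the observation that disjoint sub-families of mutually independent families are independent, and the fact that summation is coordinatewise increasing. The most delicate part is purely notational, namely keeping the bijection between the index sets $[kn]$ and $[k]\times[n]$ consistent when invoking the two properties.
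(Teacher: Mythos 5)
Your proof is correct and follows essentially the same route as the paper: pool all $kn$ variables via the union property (the paper applies it implicitly for $k$ families, you make the induction explicit) and then apply the disjoint-blocks closure property with the coordinatewise-increasing summation map. No gaps.
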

\begin{proof}
    By \Cref{lem:prelim:neg:properties}(\ref{lem:prelim:neg:union}) we know that the union $\mathcal{X}^1 \cup \dots \cup \mathcal{X}^k$ forms a collection of $nk$ random variables that are negatively associated.
    We divide it into $n$ disjoint subsets of the form $(\{X_j^1, \dots, X_j^k\})_{j=1}^n$ and apply \Cref{lem:prelim:neg:properties}(\ref{lem:prelim:neg:disjoint}) for the increasing function $f \colon \rr^k \to \rr$, $f(x_1,\dots,x_k) = \sum_{i=1}^k x_i$.
\end{proof}

Negative association occurs naturally in situations like random sampling without replacement.
A~scenario important for us is when an ordered sequence of numbers is being randomly permuted.
Intuitively, observing a high value at some index removes this value from the pool and decreases the chances of seeing high values at the remaining indices.

\begin{theorem}[{\cite[Thm. 2.11]{joag1983negative}}]
\label{lem:prelim:neg:permutation}
    Consider a sequence $(x_1, x_2, \dots, x_n)$ of real numbers.
    Let $\Pi \colon [n] \to [n]$ be a random variable representing a 
    permutation of the set $[n]$ chosen uniformly at random.
    For $i \in [n]$ we define a random variable $X_i = x_{\Pi^{-1}(i)}$.
    Then the random variables $X_1, X_2, \dots, X_n$ are negatively associated.
\end{theorem}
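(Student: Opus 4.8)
The plan is to prove the statement by induction on the length $n$ of the sequence; the base case $n=1$ is trivial, as any single random variable is negatively associated. For the inductive step I would begin with two harmless reductions. First, padding $f_2$ with dummy arguments --- its extension to $\rr^{n-|A_1|}$ that ignores the extra coordinates is still increasing and changes neither side of the target inequality --- I may assume that $A_1$ and $A_2$ partition $[n]$; if either block is empty the inequality is trivial, so assume both are nonempty. Second, since the law of $(x_{\Pi^{-1}(1)},\dots,x_{\Pi^{-1}(n)})$ is unchanged under permuting the source sequence $x_1,\dots,x_n$, I may relabel so that $x_n$ is a maximum among the $x_i$. Write $F_1=f_1(X_i\mid i\in A_1)$ and $F_2=f_2(X_i\mid i\in A_2)$, so that the goal becomes $\ex{F_1F_2}\le\ex{F_1}\ex{F_2}$.

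The main idea is then to let $P$ be the random position occupied by the $n$-th source element: $P$ is uniform on $[n]$, and conditioned on $P=j$ we have $X_j=x_n$ while $(X_i)_{i\ne j}$ is a uniformly random arrangement of $x_1,\dots,x_{n-1}$ over the positions $[n]\sm\{j\}$ --- the permutation distribution on $n-1$ elements, which is negatively associated by the induction hypothesis. Fix $j$; by symmetry suppose $j\in A_1$. Then $F_1$, with its $j$-th coordinate frozen to $x_n$, is an increasing function of $(X_i)_{i\in A_1\sm\{j\}}$, and since $A_1\sm\{j\}$ and $A_2$ are disjoint subsets of $[n]\sm\{j\}$, the induction hypothesis gives $\ex{F_1F_2\mid P=j}\le\ex{F_1\mid P=j}\ex{F_2\mid P=j}$. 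Writing $p_j=\ex{F_1\mid P=j}$ and $q_j=\ex{F_2\mid P=j}$ and averaging over $j$, this yields $\ex{F_1F_2}\le\frac1n\sum_{j\in[n]}p_jq_j$, whereas $\ex{F_1}=\frac1n\sum_j p_j=:\bar p$ and $\ex{F_2}=\frac1n\sum_j q_j=:\bar q$.

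What remains is to show $\frac1n\sum_j p_jq_j\le\bar p\,\bar q$, and I expect this rearrangement to be the crux. Two structural facts drive it. First, $p_j$ takes a common value $\alpha$ for every $j\in A_2$ and $q_j$ takes a common value $\beta$ for every $j\in A_1$, because conditioning on $P=j$ with $j$ outside a block leaves the conditional law of that block unchanged --- the marginal of the $(n-1)$-element permutation distribution on a fixed set of positions depends only on the number of positions. Second, $p_j\ge\alpha$ for every $j\in A_1$ and $q_j\ge\beta$ for every $j\in A_2$; this follows from a monotone coupling, since conditioned on $P=j$ with $j\in A_1$ the $j$-th coordinate feeding $f_1$ equals the maximum $x_n$, whereas in the configuration defining $\alpha$ it would be some element of $\{x_1,\dots,x_{n-1}\}$, which is at most $x_n$, so coupling the remaining coordinates identically and using monotonicity of $f_1$ gives $p_j\ge\alpha$. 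Granting the first fact, a one-line computation collapses $\frac1n\sum_j p_jq_j$ to $\alpha\bar q+\beta\bar p-\alpha\beta$, hence $\bar p\,\bar q-\frac1n\sum_j p_jq_j=(\bar p-\alpha)(\bar q-\beta)$; granting the second fact, averaging gives $\bar p\ge\alpha$ and $\bar q\ge\beta$, so the product is nonnegative and the induction goes through.

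I expect the rearrangement step to be the principal obstacle, and it is illuminating to see why the obvious shortcut fails. If one instead conditions on a single coordinate $X_m$ with $m\notin A_1\cup A_2$ and tries to factor, one is left needing $\ex{g_1(X_m)g_2(X_m)}\le\ex{g_1(X_m)}\ex{g_2(X_m)}$ for $g_i(v)=\ex{F_i\mid X_m=v}$; but conditioning on a larger value of $X_m$ removes that value from the pool feeding \emph{both} blocks, so $g_1$ and $g_2$ are each decreasing in $v$ and hence positively correlated --- the inequality points the wrong way. Conditioning on the position of a maximum breaks this symmetry: exactly one of $F_1,F_2$ absorbs the frozen maximum coordinate while the conditional law of the other block is untouched, and that asymmetry is what forces the rearrangement to have the correct sign. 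The only leftover points are light bookkeeping --- handling ties by designating a single maximal element and arguing with labeled source elements rather than values, and the degenerate cases where $A_1$ or $A_2$ is empty --- which are routine.
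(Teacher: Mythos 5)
The paper does not prove this statement at all: it is imported verbatim as Theorem 2.11 of Joag-Dev and Proschan \cite{joag1983negative}, so there is no in-paper proof to compare against. Your argument is, as far as I can check, correct and complete, and it is essentially the classical inductive proof of that theorem: reduce to $A_1\sqcup A_2=[n]$ by padding, condition on the position $P$ of a maximal source element, apply the induction hypothesis to the conditional law (the permutation distribution on $n-1$ elements), and finish with the rearrangement identity $\bar p\,\bar q-\tfrac1n\sum_j p_jq_j=(\bar p-\alpha)(\bar q-\beta)\ge 0$. The two structural facts you isolate are the real content and both are justified correctly: constancy of $p_j$ on $A_2$ (and $q_j$ on $A_1$) because conditioning on $P=j$ outside a block leaves that block's conditional marginal depending only on $|A_1|$ (resp.\ $|A_2|$), and $p_j\ge\alpha$ on $A_1$ via the monotone coupling that fixes the $A_1\setminus\{j\}$ coordinates and replaces the entry at $j$ by the maximum. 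The only points worth tightening in a written version are minor: state explicitly that all expectations are finite (the $X_i$ take finitely many values), and note that the coupling is legitimate because the marginal of the $(n-1)$-element permutation distribution on $|A_1|-1$ positions is the same in both configurations, with the $j$-th entry then being $x_n$ versus a (pointwise smaller) leftover element. Your closing remark about why conditioning on a value $X_m$ fails, and why conditioning on the position of the maximum creates the needed asymmetry, is accurate and a good sanity check.
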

    
\section{The reduction}

\begin{figure}[t]
\centering
\includegraphics[scale=0.85]{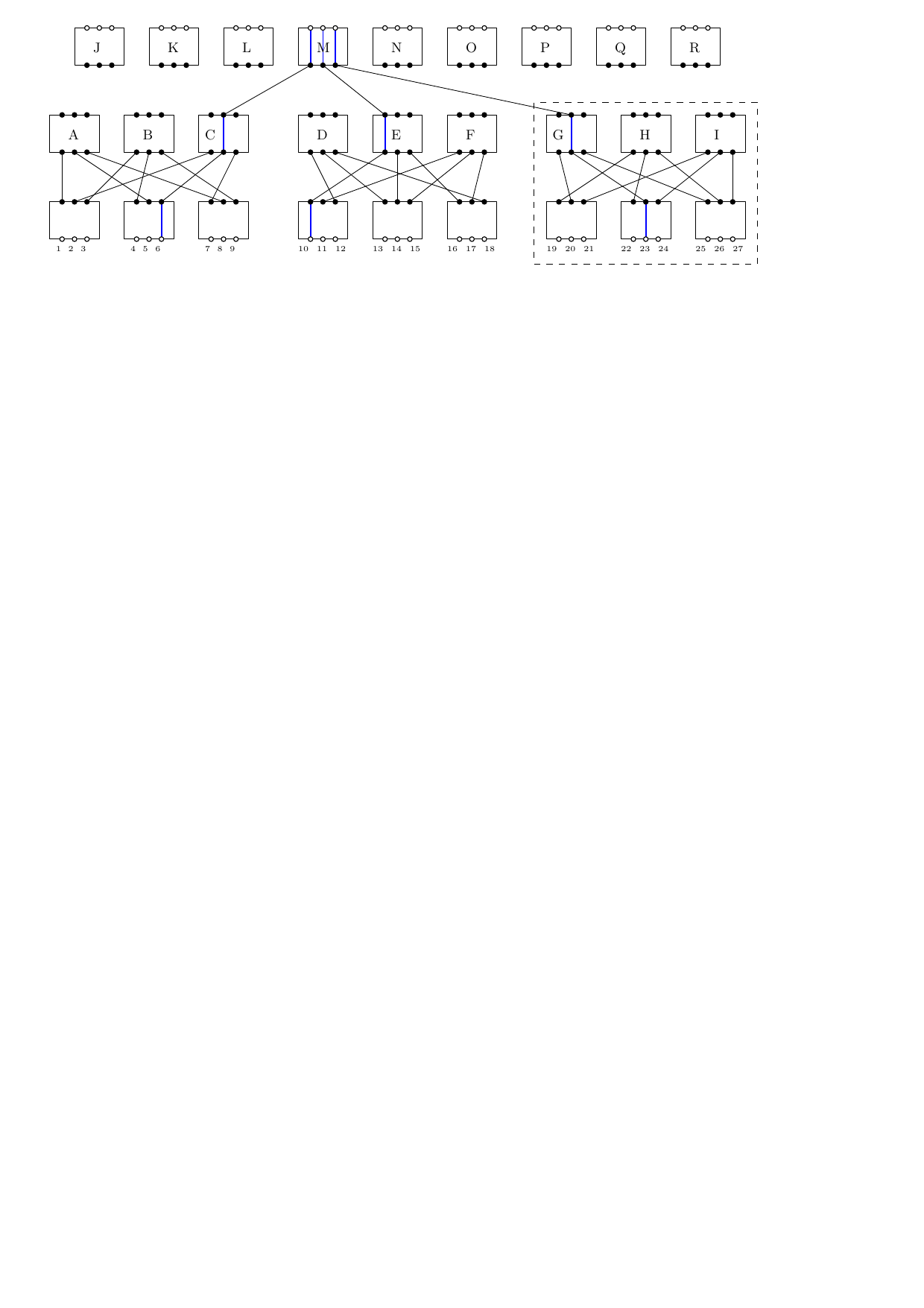}
\bigskip

\begin{adjustbox}{width=\columnwidth,center}
\begin{tabular}{|ccccccccc|ccccccccc|ccccccccc|}
\hline
J & K & L & N & O & M & P & R & Q &
M & Q & L & N & O & J & P & R & K &
P & K & L & N & M & O & J & R & Q\\
1 & 2 & 3 & 4 & 5 & 6 & 7 & 8 & 9 & 10 & 11 & 12 & 13 & 14 & 15 & 16 & 17 & 18 & 19 & 20 & 21 & 22 & 23 & 24 & 25 & 26 & 27\\
\hline
\end{tabular}
\end{adjustbox}
\bigskip

\begin{adjustbox}{width=\columnwidth,center}
\begin{tabular}{|ccc|ccc|ccc|ccc|ccc|ccc|ccc|ccc|ccc|ccc|ccc|}
\hline
A & C & B & B & A & C & C & A & B & E & F & D & D & E & F & E & F & D & H & G & I & H & G & I & G & H & I\\
1 & 2 & 3 & 4 & 5 & 6 & 7 & 8 & 9 & 10 & 11 & 12 & 13 & 14 & 15 & 16 & 17 & 18 & 19 & 20 & 21 & 22 & 23 & 24 & 25 & 26 & 27\\
\hline
\end{tabular}
\end{adjustbox}
\caption{An illustration for \Cref{def:game:instance} with $k=d=3$.
The boxes represent copies of an instance $I$ with $|\mathcal{T}|=3$, the large instance is $J_{3,3}(I,\beta)$ for the scheme $\beta$ listed at the bottom, and the dashed rectangle surrounds the instance $J_3 = J_{3,2}(I,\beta_3)$ where $\beta_3$ is a truncation of $\beta$ to the right subtree of $T_{3,3}$.
The hollow disks represent the sinks and sources on the large instance.
All the arcs are oriented upwards.
The leaves of $T_{3,3}$ are numbered as $1,2,\dots,27$.
For the sake of legibility, most of the arcs in the last layer are omitted and
the copies of the original instance within layers $2,3$ are marked with letters.
The letters are also used in the representation of the scheme $\beta$ which contains 9 bijections between sets of size 3 and 3 bijections between sets of size 9 (and one bijection for size 27, which is immaterial here).
The blue lines exemplify vertex pairs which belong to the request set of the large instance; the sources (in the layer 1) indexed by $6,10,23$ are mapped to the sinks in the copy $M$ (in the layer 3).
If a subset $\Gamma \sub [27]$ includes $6,10,23$ then it has a collision with respect to the scheme $\beta$.
If we work with a no-instance then such a subset $\Gamma$ of requests cannot be served as this would require routing three of them through the copy $M$.
}
\label{fig:construction}
\end{figure}

Our main objects of interest are collections of functions associated with the nodes of the full $k$-ary rooted tree.
Such a function for a node $v$ gives an ordering of leaves in the subtree of $v$.

\begin{definition}
    A {\em scheme} for $T_{k,d}$ is a collection of functions, one for each node in $T_{k,d}$, such that the function $f_v$ associated with $v \in V(T_{k,d})$ is a bijection from $L(T^v_{k,d})$ to $[|L(T^v_{k,d})|]$.
    Let $\schemes(k,d)$ denote the family of all schemes for $T_{k,d}$.
\end{definition}

We will now formalize the idea of connecting multiple copies of an instance.
On an intuitive level, we construct a $d$-layered instance 
by taking $k$ many $(d-1)$-layered instances and adding a new layer comprising $k^{d-1}$ copies of the original instance $I$.
Then we map the sinks in the layer $(d-1)$ to the sources in the layer $d$ according to $k$ bijections read from a scheme.
These mappings govern how we place the arcs towards the layer $d$ and which vertex pairs form the new request set.
We need a scheme $\beta \in \schemes(k,d)$ to arrange all the arcs between the layers.

In order to simplify the notation we introduce the following convention.
Suppose that an instance $J$ is being build with multiple disjoint copies of an instance  $I = (D,k,\mathcal{T})$, referred to as $I_1, I_2, \dots$.
Then we refer to the copy of the vertex $s_i \in V(D)$ (resp. $t_i$) in $I_j$ as $I_j[s_i]$ (resp. $I_j[t_i]$).

\begin{definition}\label{def:game:instance}
Given an instance $I = (D,k,\mathcal{T})$ of \dagdp and a scheme $\beta = (f_v)_{v \in V(T_{k,d})} \in \schemes(k,d)$ we construct an instance $J_{k,d}(I,\beta) = (D',k^d,\mathcal{T}')$ of \dagdp. 
The elements of $\mathcal{T}'$ will be indexed by the leaves of $T_{k,d}$
as $(s_v,t_v)_{v \in L(T_{k,d})}$ while the elements of $\mathcal{T}$ (in the instance $I$) are indexed by $1,\dots,k$ as $(s_i,t_i)_{i\in[k]}$.

If $d=1$, we simply set $J_{k,1}(I,\beta) = I$, ignoring $\beta$.
We index $\mathcal{T}$ by $L(T_{k,1})$ in an arbitrary order.

Consider $d > 1$. Let $r$ be the root of $T_{k,d}$ with $\children(r) = \{u_1, \dots, u_k\}$.
For $i \in [k]$ let $\beta_i$ be the truncation of $\beta$ to the nodes in the subtree  $T^{u_i}_{k,d}$ and $J_i = (D_i, k^{d-1}, \mathcal{T}_i)$ be the instance $J_{k,d-1}(I, \beta_i)$.
We take a disjoint union of $J_1,\dots,J_k$ and $k^{d-1}$ copies of $I$ referred to as $I_1,I_2,\dots$ (see \Cref{fig:construction}).
These  $k^{d-1}$ copies of $I$ form layer $d$.


Recall that for $i \in [k]$ the bijection $f_{u_i}$ maps $L(T^{u_i}_{k,d})$ to $[k^{d-1}]$.
For each $i \in [k]$ and  $v \in L(T^{u_i}_{k,d})$ we insert an arc from $J_{i}[t_v]$ to $I_{f_{u_i}(v)}[s_i]$.
Then we add the pair $(J_{i}[s_v], I_{f_{u_i}(v)}[t_i])$ to $\mathcal{T}'$.
This pair is assigned index $\iota(v)$ in $\mathcal{T}'$ where $\iota$ is the natural embedding $L(T^{u_i}_{k,d}) \to L(T_{k,d})$.
\end{definition}

Note that whenever $D$ is acyclic then $D'$ is acyclic as well so the procedure indeed outputs an instance of \dagdp.
It is also clear that when $I$ admits a solution serving all the $k$ requests, it can be used to serve all the requests in $J_{k,d}(I,\beta)$.

\begin{observation}\label{lem:game:yes}
    Let $k,d \in \nn$ and $\beta \in \schemes(k,d)$.
    If $I = (D,k,\mathcal{T})$ is a yes-instance of \dagdp
    then $J_{k,d}(I,\beta)$ is a yes-instance as well.
\end{observation}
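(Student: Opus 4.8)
The plan is to argue by induction on $d$, following exactly the recursive structure of \Cref{def:game:instance}. For $d=1$ there is nothing to prove, since $J_{k,1}(I,\beta) = I$ by definition, and $I$ is a yes-instance by assumption. For the inductive step $d>1$, I would fix a solution $\mathcal{P}$ of $I$ that serves all $k$ requests, i.e.\ a collection of pairwise vertex-disjoint $(s_i,t_i)$-paths $P_1,\dots,P_k$ inside $D$. The goal is to assemble from it a solution $\mathcal{P}'$ of $J_{k,d}(I,\beta)$ serving all $k^d$ requests. Recall that $J_{k,d}(I,\beta)$ is built from the sub-instances $J_1,\dots,J_k$ (each equal to $J_{k,d-1}(I,\beta_i)$) together with the layer-$d$ copies $I_1,I_2,\dots,I_{k^{d-1}}$ of $I$; these pieces are vertex-disjoint except for the single arcs added between them.

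The key observation is that each request of $J_{k,d}(I,\beta)$ indexed by $v \in L(T^{u_i}_{k,d})$ is the pair $(J_i[s_v],\, I_{f_{u_i}(v)}[t_i])$, and a path serving it can be obtained by concatenating three pieces: (a) inside $J_i$, a path from $J_i[s_v]$ to $J_i[t_v]$ serving the request of $J_i$ indexed by $v$; (b) the single arc from $J_i[t_v]$ to $I_{f_{u_i}(v)}[s_i]$ inserted in \Cref{def:game:instance}; and (c) inside the copy $I_{f_{u_i}(v)}$, the path $P_i$ (from $s_i$ to $t_i$). By the induction hypothesis applied to each $\beta_i$, the sub-instance $J_i$ has a solution serving all its $k^{d-1}$ requests, which supplies all the pieces of type (a) for that $i$, pairwise disjoint within $J_i$. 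So I would take, for every $i\in[k]$ and every $v\in L(T^{u_i}_{k,d})$, the concatenation of these three pieces as the path serving request $\iota(v)$.

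It remains to check that the resulting $k^d$ paths are pairwise vertex-disjoint. Two paths serving requests $\iota(v)$ and $\iota(v')$ fall into one of two cases. If $v,v'$ lie in the same subtree $T^{u_i}_{k,d}$ but $v\neq v'$: their type-(a) pieces are disjoint because they come from a single valid solution of $J_i$; their type-(b) arcs are distinct and their endpoints $J_i[t_v], J_i[t_{v'}]$ are distinct (as $v\ne v'$ implies the requests of $J_i$ are distinct) and $I_{f_{u_i}(v)}[s_i], I_{f_{u_i}(v')}[s_i]$ are distinct since $f_{u_i}$ is a bijection; and their type-(c) pieces lie in different copies $I_{f_{u_i}(v)} \ne I_{f_{u_i}(v')}$, again by injectivity of $f_{u_i}$. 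If $v,v'$ lie in subtrees $T^{u_i}_{k,d}$ and $T^{u_{i'}}_{k,d}$ with $i\neq i'$: the type-(a) pieces live in the disjoint instances $J_i$ and $J_{i'}$; the type-(c) pieces use copies $I_{f_{u_i}(v)}$ and $I_{f_{u_{i'}}(v')}$, and here I must note that even if these indices coincide as integers, the paths $P_i$ and $P_{i'}$ inside that one copy are vertex-disjoint because $\mathcal{P}$ is a solution of $I$ — so there is no conflict; the type-(b) arcs are disjoint since their tails are in $J_i$ vs.\ $J_{i'}$ and their heads are sources of (copies of) $I$ used by at most the two paths $P_i,P_i'$ which are disjoint. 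Collecting these cases establishes disjointness, hence $\mathcal{P}'$ serves all $k^d$ requests and $J_{k,d}(I,\beta)$ is a yes-instance.

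The only mildly delicate point — and the step I would write out most carefully — is the case $i\ne i'$ where the two three-part paths reuse the \emph{same} layer-$d$ copy $I_{f_{u_i}(v)} = I_{f_{u_{i'}}(v')}$: here disjointness is inherited not from the induction hypothesis but from the fact that $P_i$ and $P_{i'}$ are two distinct disjoint paths of the fixed solution $\mathcal{P}$ of $I$, and the two incoming arcs land on the distinct sources $s_i$ and $s_{i'}$ of that copy. Everything else is bookkeeping about bijections and disjoint unions.
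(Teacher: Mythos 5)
Your proof is correct and matches the paper's intended argument: the paper states this as an \emph{observation} with only a one-sentence justification ("a solution of $I$ can be used to serve all requests in $J_{k,d}(I,\beta)$"), and your induction on $d$, routing each request through its type-(a)/(b)/(c) pieces and noting that each layer-$d$ copy of $I$ hosts the $k$ pairwise disjoint paths $P_1,\dots,P_k$ of the fixed solution $\mathcal{P}$, is exactly the formalization of that idea. No discrepancy to report.
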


The case when $I$ is a no-instance requires a more careful analysis.
We introduce the notion  of a {\em collision} that certifies that some subset of requests cannot be served.

\begin{definition}
    Let $k,d \in \nn$, $A \sub L(T_{k,d})$, and $\beta = (f_v)_{v \in V(T_{k,d})} \in \schemes(k,d)$.
    We say that $u \in V(T_{k,d})$ forms a {\em collision} with respect to $(A,\beta)$ if $A$ contains elements $a_1,\dots,a_k$
such that:
\begin{enumerate}
    \item for each $i \in [k]$ the node $a_i$ is a descendant of $u_i \in \children(u)$ where $u_1,\dots,u_k$ are distinct,
    \item $f_{u_1}(a_1) = f_{u_2}(a_2) = \dots = f_{u_k}(a_k)$.
\end{enumerate}
\end{definition}

\begin{lemma}\label{lem:game:no}
    Let $k,d \in \nn$, $A \sub L(T_{k,d})$, and $\beta = (f_v)_{v \in V(T_{k,d})} \in \schemes(k,d)$.
    Suppose that there exists a collision with respect to $(A,\beta)$.
    Let $I = (D,k,\mathcal{T})$ be a no-instance of \dagdp.
    Then no solution to the instance $(D',k^d,\mathcal{T}') = J_{k,d}(I,\beta)$ can simultaneously serve all the requests $\{(s_v,t_v)_{v \in A}\}$.
\end{lemma}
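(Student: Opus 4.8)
The plan is to trace the collision up through the layers of the construction and argue that serving all requests in $A$ would force $k$ request-paths to pass through a single copy of $I$ in the top layer, contradicting the assumption that $I$ is a no-instance. Concretely, suppose $u \in V(T_{k,d})$ is a node forming a collision with respect to $(A,\beta)$, witnessed by leaves $a_1,\dots,a_k \in A$ with $a_i$ a descendant of $u_i \in \children(u)$, the $u_i$ pairwise distinct, and $f_{u_1}(a_1) = \dots = f_{u_k}(a_k) =: m$. First I would pass to the subtree $T^u_{k,d}$: by unrolling \Cref{def:game:instance} recursively, the instance $J_{k,d}(I,\beta)$ contains, as an ``induced subinstance'', a copy of $J_{k,\ell}(I,\beta|_{T^u})$ where $\ell$ is the height of $u$; more precisely, the request $(s_{a_i}, t_{a_i})$ in $\mathcal T'$ and any path serving it can be localized so that its relevant portion lives inside this subinstance. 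So without loss of generality I may assume $u$ is the root $r$ of $T_{k,d}$, $\children(r) = \{u_1,\dots,u_k\}$, and $f_{u_i}(a_i) = m$ for all $i$.

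Next I would look at where request $(s_{a_i}, t_{a_i})$ actually sits in $J_{k,d}(I,\beta)$. By the indexing rule in \Cref{def:game:instance} (the pair assigned index $\iota(v)$ is $(J_i[s_v], I_{f_{u_i}(v)}[t_i])$ for $v \in L(T^{u_i}_{k,d})$), the request indexed by $a_i$ is exactly $(J_i[s_{a_i}],\, I_{f_{u_i}(a_i)}[t_i]) = (J_i[s_{a_i}],\, I_m[t_i])$. Thus \emph{all $k$ of these requests end in the same copy $I_m$ of $I$ from the top layer} — request $a_i$ must terminate at $I_m[t_i]$. Now consider any solution $\mathcal P$ to $J_{k,d}(I,\beta)$ serving all requests in $A$, in particular $a_1,\dots,a_k$. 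For each $i$, the path $P_i \in \mathcal P$ serving request $a_i$ goes from $J_i[s_{a_i}]$ to $I_m[t_i]$. By construction the only arcs entering the copy $I_m$ are the arcs of the form $J_i[t_v] \to I_m[s_i]$ (one for each $i \in [k]$ and the unique $v$ with $f_{u_i}(v) = m$, namely $v = a_i$), plus the internal arcs of $I_m$ itself. Hence $P_i$, in order to reach $I_m[t_i]$, must enter $I_m$ through its source vertex $I_m[s_i]$, and the portion of $P_i$ inside $I_m$ is an $(s_i,t_i)$-path within the copy $I_m$. Since the $P_i$ are vertex-disjoint, their restrictions to $I_m$ form $k$ pairwise vertex-disjoint paths inside a single copy of $D$, one for each request $(s_i,t_i)$, $i \in [k]$ — i.e.\ a solution serving all $k$ requests of $I$. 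This contradicts $I$ being a no-instance, so no such $\mathcal P$ exists.

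The main obstacle I expect is the bookkeeping in the reduction-to-the-root step: one must verify that restricting attention to the subinstance rooted at $u$ is legitimate, i.e.\ that a global solution of $J_{k,d}(I,\beta)$ serving the requests $a_1,\dots,a_k$ really does induce, by restriction, a solution of the subinstance $J_{k,\ell}(I,\beta|_{T^u})$ serving the corresponding requests. This follows because the arcs added at layers above $u$ only attach to the sink-vertices $J_i[t_v]$ of the subinstances and emanate from fresh copies of $I$, so they cannot be used to shortcut a path whose endpoints both lie strictly inside the subtree of $u$; but making this precise requires carefully identifying, via the recursive unrolling, which vertices of $D'$ belong to the subinstance and checking that no path serving an internal request can leave it. Once that localization is in hand, the core argument above — the $k$ colliding requests are funneled into one copy $I_m$, and vertex-disjointness there yields a full solution of $I$ — is essentially immediate. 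Alternatively, one can avoid the explicit reduction-to-the-root and argue directly at the node $u$: the requests $a_1,\dots,a_k$ all terminate in a common copy of $I$ sitting in layer $(\mathrm{depth}(u)+1)$ relative to $T^u$, and the same in-degree/disjointness analysis applies verbatim to that copy.
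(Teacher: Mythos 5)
Your proposal is correct and matches the paper's argument in substance: the core step — the $k$ colliding requests are forced through the single copy $I_m$, whose only incoming arcs from the relevant sub-instances are $J_i[t_{a_i}] \to I_m[s_i]$, so vertex-disjointness would yield a full solution of the no-instance $I$ — is exactly the paper's root-collision case. The localization you flag as the main obstacle is handled in the paper by induction on $d$: when the collision is not at the root, any path serving a request indexed by a descendant of $u_i$ must exit $D_i$ through $J_i[t_v]$ (the unique arc into the top-layer copy containing its sink), hence induces a path serving the corresponding request of the sub-instance $J_i$, and one applies the inductive hypothesis to $J_i$.
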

\begin{proof}
    We will prove the lemma by induction on $d$.
    In the case $d=1$ we have $J_{k,1}(I,\beta) = I$ and the only possibility of a collision is when $A = L(T_{k,1})$ so $\{(s_v,t_v)_{v \in A}\}$ is the set of all the requests. 
    By definition, we cannot serve all the requests in a no-instance.
    Let us assume $d > 1$ from now on.
    
    First suppose that the collision occurs at the root $r \in V(T_{k,d})$.
    Let $\children(r) = \{u_1, \dots, u_k\}$.
    Then there exists $A' = \{a_1,\dots,a_k\} \sub A$ such that $a_i$ is a descendant of $u_i$ and $f_{u_1}(a_1) = f_{u_2}(a_2) = \dots = f_{u_k}(a_k)$.
    We refer to this common value as $x = f_{u_i}(a_i)$.
    We will also utilize the notation from \Cref{def:game:instance}.

    Observe that in order to serve the request $(s_{a_i},t_{a_i})$ in $D'$ the path $P_i$ starting at $s_{a_i} = J_i[s_{a_i}]$ must traverse the arc from $J_i[t_{a_i}]$ to $I_x[s_i]$ as every 
    other arc leaving $D_i$ leads to some $I_y$ with $y \ne x$ having no connection to $t_{a_i} = I_x[t_i]$.
    Furthermore, the path $P_i$ must contain a subpath connecting $I_x[s_i]$ to $I_x[t_i]$ in $I_x$.
    Since the same argument applies to every $i \in [k]$, we would have to serve all the $k$ requests in $I_x$.
    But this is impossible because $I_x$ is a copy of $I$ which is a no-instance. 

    Now suppose that the collision does not occur at the root.
    Then it must occur in the subtree $T^{u_i}_{k,d}$
    for some $i \in [k]$.
    For every $v \in A$ being a descendant of $u_i$,
    any path $P_v$ serving the request $(s_{v},t_{v})$ in $D'$ must contain a subpath $P'_v$ in $D_i$ from $J_i[s_{v}]$ to $J_i[t_{v}]$ as again it must leave $D_i$ through the vertex $J_i[t_{v}]$.
    By the inductive assumption, we know that we cannot simultaneously serve all the requests $(s_v,t_v)_{v \in A \cap L(T^{u_i}_{k,d})}$ in the smaller instance $J_i$.
    The lemma follows.

\end{proof}

We can now state our main technical theorem.
Recall that a subset  $A \sub L(T_{k,d})$ is called a $q$-subset if $|A| \ge {|L(T_{k,d})|}/{q} = {k^d}/{q}$.

\begin{restatable}{theorem}{thmBob}
\label{thm:game:bob}
    Let $k,d,q \in \nn$ satisfy  $d \ge k \cdot (4q)^{4k\log k}$.
    Then there exists $\beta \in \schemes(k,d)$
    such that for every $q$-subset $A \sub L(T_{k,d})$ there is a collision with respect to $(A,\beta)$. 
\end{restatable}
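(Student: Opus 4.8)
The plan is to follow the probabilistic-method strategy outlined in the introduction: choose the scheme $\beta$ by picking every bijection $f_v$ independently and uniformly at random, and show that for any fixed $q$-subset $A$, the probability that $A$ has \emph{no} collision with respect to $(A,\beta)$ is strictly less than $2^{-k^d}$. Since there are at most $2^{k^d}$ subsets of $L(T_{k,d})$, a union bound over all $q$-subsets $A$ then yields a positive probability that the random $\beta$ realizes a collision for \emph{every} $q$-subset simultaneously, which proves the existence claim.

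To bound the failure probability for a fixed $A$, I would analyze the tree recursively from the leaves upward. For a node $v$ at depth $d-\ell$ (so $T^{v}_{k,d}$ has $\ell$ layers), define a quantity that measures ``how much of $A$ survives inside $T^v$ without having forced a collision yet'' — roughly, the number of leaves $a \in A \cap L(T^v)$ together with their images under the composed bijections, or more precisely a random variable counting, for each value $x \in [|L(T^v)|]$, whether $x$ is ``hit'' by the $A$-leaves routed through $v$. The key structural point is: a collision occurs at an internal node $u$ precisely when the $k$ children of $u$ all send some $A$-descendant to a common value under $f_{u_1},\dots,f_{u_k}$; so $A$ avoids a collision at $u$ exactly when the $k$ sets of ``hit values'' coming from the children are pairwise disjoint. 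Using \Cref{lem:prelim:neg:permutation}, the indicator vector of hit values at a child, after applying the random bijection $f_{u_i}$, is a negatively associated family; by \Cref{lem:prelim:neg:properties}(\ref{lem:prelim:neg:union}) and \Cref{lem:prelim:neg:sum}, the sum over the $k$ children of these indicator vectors is negatively associated, and \Cref{lem:prelim:neg:properties}(\ref{lem:prelim:neg:orphant}) lets me control the probability that this sum stays bounded (i.e.\ that all $k$ child-images avoid pairwise overlap) by a product of marginal probabilities. This is exactly why negative association is the right tool: it replaces a messy dependent calculation by an independent-looking product bound without losing in the wrong direction.

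Concretely, I would track the ``density'' $\rho_v = |A \cap L(T^v)| / |L(T^v)|$, which starts at $\ge 1/q$ at the root by the $q$-subset assumption. The aim is a recursive inequality showing that, conditioned on no collision having occurred so far, the density can only shrink by a controlled multiplicative factor as we pass from a node to its children, \emph{unless} the probability of still avoiding a collision is already tiny. Since a collision-free scheme forces the $k$ child image-sets to be disjoint subsets of $[|L(T^v)|]$, we get roughly $\sum_{i=1}^k |A \cap L(T^{u_i})| \le |L(T^v)|$, i.e.\ the average child density is at most $1/k$ times... — more carefully, the birthday-paradox-type bound says that $k$ random subsets each of density $\rho$ in a universe of size $N$ are pairwise disjoint with probability at most $\exp(-\Omega(\rho^2 k^2 N))$ once $\rho^2 k^2 N$ is large, and $N = k^{\ell-1}$ grows doubly fast down the tree. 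So either the densities stay bounded below (and then deep enough in the tree $N$ is so large that the disjointness probability is below $2^{-k^d}$), or the densities decay — but they cannot decay below, say, $1/q$ at \emph{every} level without the total count $|A|$ dropping below $k^d/q$, contradicting that $A$ is a $q$-subset. Balancing these two failure modes across the $d$ levels, with $d \ge k\cdot(4q)^{4k\log k}$, should give the desired $2^{-k^d}$ bound.

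I expect the main obstacle to be setting up the recursion cleanly: the naive density argument conflates two things — the \emph{deterministic} constraint that collision-freeness imposes on the sizes $|A \cap L(T^{u_i})|$, and the \emph{probabilistic} cost of the random bijections not colliding — and one must be careful that conditioning on ``no collision in the whole subtree'' does not destroy the negative-association structure needed at the current node. The cleanest route is probably to expose the bijections one layer at a time, bounding at each internal node $u$ the conditional probability that $u$ is collision-free given the images already fixed in the subtrees below, and to phrase the whole thing as: $\Pr[\text{$A$ collision-free}] \le \prod_{u} \Pr[\text{$u$ collision-free} \mid \mathcal{F}_u]$ over internal nodes $u$ on some path or antichain, with each factor at most $\exp(-\Omega(\rho_u^2 k^2 |L(T^u)|))$ by the negative-association Chernoff-type estimate. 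Getting the quantitative dependence to match the stated bound $d \ge k\cdot(4q)^{4k\log k}$ — in particular tracking how the density can be ``spent down'' at most $\log_q$-many times before the universe size $k^{\ell-1}$ becomes overwhelming — is the part that will need the careful bookkeeping.
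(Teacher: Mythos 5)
Your high-level frame (random scheme, failure probability below $2^{-k^d}$ for each fixed $q$-subset, union bound over at most $2^{k^d}$ subsets, negative association of randomly permuted indicators) is exactly the paper's, but the core estimate is based on a mischaracterization of the collision event. A collision at $u$ occurs when all $k$ children map some $A$-leaf to a \emph{common} index; avoiding a collision at $u$ therefore means the $k$ image-sets have empty \emph{common intersection}, not that they are pairwise disjoint. Your ``birthday-paradox'' bound $\exp(-\Omega(\rho^2k^2N))$ and the deterministic consequence $\sum_i |A\cap L(T^{u_i})|\le |L(T^v)|$ both quantify pairwise disjointness and are simply not implied by collision-freeness. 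The correct estimate (the paper's \Cref{lem:scheme:assosiation}) is much weaker: the probability that no index is hit by all $k$ children is at most $(1-z^{-k})^{\ell}\le\exp(-\ell/z^k)$ when each child's $A$-density is at least $1/z$ --- note the exponentially small $z^{-k}$ per index, which is where the $(4q)^{4k\log k}$ in the hypothesis on $d$ ultimately comes from. Built on the wrong event, your density recursion does not close.

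The second missing idea is how to find nodes where a collision is even possible. A collision at $u$ requires \emph{every} child of $u$ to contain an $A$-leaf, and $A$ could concentrate entirely in one child's subtree at many nodes; tracking only the density $\rho_v$ of the current node does not rule this out. The paper handles this with a deterministic averaging argument (\Cref{lem:tree-one}): from any $v$ with $\fr_A(v)\ge 1/q$, within $O(k\log q)$ levels one finds a descendant $u$ all of whose $k$ children have density at least $1/(2q)$, because whenever some child falls below $1/(4q)$ another child's density grows by a factor $(1+\frac{1}{2k-2})$, which cannot happen more than $O(k\log q)$ times. Applying this to $\Theta(d/\tau)$ well-separated layers (\Cref{lem:scheme:F}) produces a family $F$ of such nodes whose children-sets are pairwise disjoint, so the collision events at distinct $u\in F$ involve disjoint sets of bijections and are genuinely \emph{independent}; their failure probabilities multiply to $\exp(-k^d)$ with no conditioning needed. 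This sidesteps entirely the obstacle you yourself flag --- that conditioning on ``no collision so far'' may destroy the negative-association structure --- which your layer-by-layer exposure scheme would still have to resolve.
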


The proof is postponed to \Cref{sec:scheme} which abstracts from the \textsc{Disjoint Paths} problem and focuses on random permutations.
With \Cref{thm:game:bob} at hand, the proof of the main result is easy.

\thmMain*
\begin{proof}
    We are going to give an FPT-reduction from the exact variant of $k$-\dagdp, which is W[1]-hard~\cite{Slivkins10}, to the variant with a sufficiently large gap.
    To this end, we present an algorithm that, given an instance $I = (D,k,\mathcal{T})$,
    runs in time $f(k,q) \cdot |I|$ and outputs an instance $J = (D',k',\mathcal{T}')$ such that:
    \begin{enumerate}
        \item $k'$ depends only on $k$ and $q$,
        \item if $I$ is a yes-instance then $J$ is a yes-instance, and
        \item if $I$ is a no-instance then no solution to $J$ can simultaneously serve at least $k'/q$ requests.
    \end{enumerate}
    Obviously, being able to separate these two cases for $J$ (all requests vs. at most $\frac 1 q$-fraction of requests) is sufficient to determine whether $I$ is a yes-instance.
    
    We set $d = k \cdot (4q)^{4k\log(k)}$ accordingly to \Cref{thm:game:bob}.
    It guarantees that there exists a scheme $\beta \in \schemes(k,d)$
    such that for every $q$-subset $A \sub L(T_{k,d})$ there is a collision with respect to $(A,\beta)$.
    Observe that such a scheme can be computed in time $f(k,q)$ because $d$ is a function of $(k,q)$ and the size of the family $\schemes(k,d)$ is a function of $(k,d)$.
    The same holds for the number of all $q$-subsets $A \sub L(T_{k,d})$.
    Therefore, we can simply iterate over all $\beta \in \schemes(k,d)$ and check for each $q$-subset $A$ whether there is a collision or not.

   The instance $J$ is defined as $J_{k,d}(I,\beta)$. A direct implementation of \Cref{def:game:instance} takes time $f(k,d) \cdot |I|$.
    \Cref{lem:game:yes} says that if $I$ is a yes-instance, then $J$ is as well, whereas \Cref{lem:game:no} ensures that if $I$ is a no-instance, then for each set of $k'/q$ requests (corresponding to some $q$-subset $A \sub L(T_{k,d})$ which must have a collision with $\beta$) no solution can simultaneously serve all of them.
    This concludes the correctness proof of the reduction.
\end{proof}

We remark that \Cref{thm:main} works in a more general setting, where $q$ is not necessarily a constant, but a function of $k$.
This enables us to rule out not only an $\Oh(1)$-approximation in FPT time, but also an $\alpha(k)$-approximation for some slowly growing function $\alpha(k) \to \infty$.
However, the value of the parameter $k'$ becomes $k^d$ for $d = \Omega(q^{k\log k})$
so $q$ ends up very small compared to the new parameter~$k'$.
This is only sufficient to rule out approximation factors of the form $\alpha(k) = (\log k)^{o(1)}$.
A detailed analysis of how to adjust such parameters is performed in~\cite{steiner-orient}.

\section{Constructing the scheme}
\label{sec:scheme}

This section is devoted to the proof of \Cref{thm:game:bob}.
Before delving into the rigorous analysis, we sketch the main ideas behind the proof.

\subparagraph{Outline.}
We use the probabilistic method to prove the existence of a scheme having a collision with every $q$-subset of leaves in $T_{k,d}$.
We will show that for a sufficiently large $d$ choosing each bijection at random yields a very high probability of a collision with any fixed $q$-subset.
Specifically, the probability that a collision does not occur should be less than $2^{-k^d}$.
Since the number of all $q$-subsets of a $k^d$-size set is bounded by $2^{k^d}$, the union bound will imply that the probability that a collision does not occur for at least one $q$-subset is strictly less than one, implying the existence of the desired scheme.

Let us fix a $q$-subset $A \sub L(T_{k,d})$.
Suppose there is a vertex $u \in V(T_{k,d})$ such that for every child $y$ of $u$ the fraction of leaves in $T^y_{k,d}$ belonging to $A$ is at least $1/q$.
Let $\ell$ denote $[|L(T^y_{k,d})|]$.
For each such child we choose a random bijection from $L(T^y_{k,d})$ to $[\ell]$.
The probability that each of these $k$ bijections maps an element of $A$ to a fixed index $x \in [\ell]$ is at least $q^{-k}$.
Such events are not independent for distinct $x$ but we will see that they are negatively associated, which still allows us to upper bound the probability of no such event happening by $(1-q^{-k})^\ell$ (see \Cref{lem:scheme:assosiation}).

How to identify such a vertex $u$?
First, it is sufficient for us to relax the bound $1/q$ assumed above to $1/(4q)$. 
Observe that for each layer in   $T_{k,d}$ there must be many vertices $v$ satisfying $|A \cap L(T^v_{k,d})| \ge \frac 1 {2q} |L(T^v_{k,d})|$.
Suppose that $v$ does not meet our criterion: this means that it has a child $v'$ with less than $1/(4q)$-fraction of the $A$-leaves in its subtree.
But then the average fraction of the $A$-leaves among the remaining children is higher than the fraction for $v$.
Consequently, we can choose a child of $v$ with a higher fraction and repeat this argument inductively.
We show that after $\Oh(k\log(q))$ many steps this process must terminate so we are guaranteed to find a vertex for which every child has at least a $1/(4q)$-fraction of the $A$-leaves. This is proven in \Cref{lem:tree-one}.

Finally, to obtain a large probability of a collision we must show that there many such vertices $u$ with a large sum of their subtrees' sizes.
This will allows us to multiply the aforementioned bounds of the form $(1-q^{-k})^\ell$ with a large sum of the exponents $\ell$.
By applying the argument above to a single layer in $T_{k,d}$ we can find such a collection with the sum of their subtrees' sizes being $k^d$ divided by some function of $k$ and $q$.
But we can also apply it to multiple layers as long as they are sufficiently far from each other (so that the vertices found by the inductive procedure are all distinct). 
Therefore, it suffices to take $d$ large enough so that the number of available layers surpasses the factors in the denominator, which depend only on $k$ and $q$.
This is analyzed in \Cref{lem:scheme:F}.

\subparagraph{} 
We begin with a probabilistic lemma stating that randomly permuting $k$ large subsets of a common universe yields a large chance of creating a non-empty intersection of these sets.

\begin{lemma}\label{lem:scheme:assosiation}
    Let $k, z, \ell \in \nn$ and
    $X_1, \dots, X_k$ be subsets of $[\ell]$ of size at least $\ell / z$ each.
    Next, let $\Pi_1, \dots, \Pi_k \colon [\ell] \to [\ell]$ be independent random variables with a uniform
distribution on the family of all permutations over the set $[\ell]$. Then
    \[
     \prob{\Pi_1(X_1) \cap \Pi_2(X_2) \cap \dots \cap \Pi_k(X_k) = \emptyset} \le \exp(-\ell / z^{k}).
    \]
\end{lemma}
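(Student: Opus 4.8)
The plan is to define, for each index $x \in [\ell]$, the indicator $Z_x$ of the event that $x \in \Pi_i(X_i)$ for every $i \in [k]$, i.e.\ that $x$ lands in the common intersection. Then $\{\Pi_1(X_1) \cap \dots \cap \Pi_k(X_k) = \emptyset\}$ is exactly the event $\sum_{x \in [\ell]} Z_x = 0$, equivalently $Z_x = 0$ for all $x$. Since $\Pi_i(X_i) \ni x$ iff $\Pi_i^{-1}(x) \in X_i$, and $\Pi_i^{-1}$ is again uniform, we get $\prob{x \in \Pi_i(X_i)} = |X_i|/\ell \ge 1/z$ for each $i$; by independence of $\Pi_1,\dots,\Pi_k$ across $i$, $\prob{Z_x = 1} = \prod_{i=1}^k |X_i|/\ell \ge z^{-k}$, so $\prob{Z_x = 0} \le 1 - z^{-k}$.

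The key structural step is to show that the variables $Z_1, \dots, Z_\ell$ are negatively associated, so that we may multiply the per-index bounds. First I would handle a single permutation: fix $i$ and for $x \in [\ell]$ set $Y^i_x = \mathbf{1}[\Pi_i^{-1}(x) \in X_i]$. This is precisely the setup of \Cref{lem:prelim:neg:permutation} applied to the $0/1$ sequence $(x_1,\dots,x_\ell)$ that has a $1$ in the positions of $X_i$ (more precisely, $Y^i_x = x_{\Pi_i^{-1}(x)}$ for that sequence), so $(Y^i_1,\dots,Y^i_\ell)$ is negatively associated. Now, since the permutations $\Pi_1,\dots,\Pi_k$ are independent, the families $(Y^i_x)_{x}$ for $i=1,\dots,k$ are mutually independent collections of negatively associated variables; by repeated application of \Cref{lem:prelim:neg:properties}(\ref{lem:prelim:neg:union}), their union $(Y^i_x)_{i \in [k],\, x \in [\ell]}$ is negatively associated. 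Partition this $k\ell$-element collection into the $\ell$ disjoint groups $B_x = \{Y^1_x, \dots, Y^k_x\}$, and apply \Cref{lem:prelim:neg:properties}(\ref{lem:prelim:neg:disjoint}) with the increasing function $f \colon \rr^k \to \rr$, $f(y_1,\dots,y_k) = \prod_{i=1}^k y_i$ (increasing on $[0,1]^k$, and one may assume the arguments lie there, or truncate): this yields that $Z_x = f(Y^1_x,\dots,Y^k_x) = \prod_{i=1}^k Y^i_x$ for $x \in [\ell]$ form a negatively associated collection.

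Finally I would invoke \Cref{lem:prelim:neg:properties}(\ref{lem:prelim:neg:orphant}) with the threshold sequence $x_x = 0$: since $\{Z_x = 0\} = \{Z_x \le 0\}$ for $0/1$ variables,
\[
  \prob{\textstyle\bigcap_{x \in [\ell]} \{Z_x = 0\}} \;\le\; \prod_{x \in [\ell]} \prob{Z_x = 0} \;\le\; \left(1 - z^{-k}\right)^{\ell} \;\le\; \exp\!\left(-\ell / z^{k}\right),
\]
using $1 - t \le e^{-t}$ in the last step. This gives the claimed bound.

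The main obstacle I anticipate is purely a bookkeeping matter: making sure \Cref{lem:prelim:neg:permutation} is applied in the correct direction (permuting $\Pi_i^{-1}$ versus $\Pi_i$, which are equidistributed, so this is harmless) and that the product function $f(y_1,\dots,y_k)=\prod y_i$ is legitimately "increasing" in the sense required by the definition of negative association — it is coordinatewise non-decreasing on the nonnegative orthant, and all the $Y^i_x$ are $0/1$-valued, so this causes no trouble. The independence-plus-union argument for combining the $k$ permutations is exactly what \Cref{lem:prelim:neg:properties}(\ref{lem:prelim:neg:union}) is designed for, so no real difficulty there.
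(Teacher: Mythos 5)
Your proof is correct and takes essentially the same route as the paper's: indicators $Y^i_x=\mathbf{1}[x\in\Pi_i(X_i)]$ made negatively associated via \Cref{lem:prelim:neg:permutation}, combined across the independent permutations, aggregated per index by an increasing function, and finished with \Cref{lem:prelim:neg:properties}(\ref{lem:prelim:neg:orphant}) and $1-t\le e^{-t}$. The only (cosmetic) difference is that the paper aggregates by the sum $Z_x=\sum_i Y^i_x$ and thresholds at $k-1$ rather than by the product thresholded at $0$, which sidesteps the monotonicity caveat you correctly flag for the product on $\rr^k$.
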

\begin{proof}
    For $i \in [k]$ and $j \in [\ell]$ let $Y^i_j = 1$
    if $j \in \Pi_i(X_i)$ and $Y^i_j = 0$ otherwise.
    By \Cref{lem:prelim:neg:permutation} the variables $(Y^i_1, \dots, Y^i_\ell)$ have negative association for each $i \in [k]$.
    Note that $\mathbb{E} {Y^i_j} \ge 1/z$.
    Next, let $Z_j = \sum_{i=1}^k Y^i_j$ for $j \in [\ell]$.
    \Cref{lem:prelim:neg:sum} ensures that the variables $Z_1, \dots, Z_\ell$ also enjoy negative association.
    Condition $\Pi_1(X_1) \cap \Pi_2(X_2) \cap \dots \cap \Pi_k(X_k) = \emptyset$ is equivalent to $\max(Z_j)_{j=1}^\ell \le k - 1$.
    We have
    \[\prob{Z_j \le k-1} = 1 - \prob{Z_j = k} = 1 - \prod_{i=1}^k\prob{j \in \Pi_i(X_i)} \le 1 - 1/z^k.\]
    \[\prob{\max(Z_j)_{j=1}^\ell \le k-1} \le \prod_{j=1}^\ell\prob{Z_j \le k-1} \le (1 - 1/z^{k})^\ell = (1 - 1/z^{k})^{z^{k} \cdot (\ell / z^{k})} \le  \exp(-\ell / z^{k}).\]
    In the first inequality we used \Cref{lem:prelim:neg:properties}(\ref{lem:prelim:neg:orphant}).
    The last one holds because $(1-\frac 1 m)^m < \frac 1 e$ for all $m \ge 2$.
\end{proof}

\subparagraph{Notation.}
We introduce some additional notation to work with the tree $T_{k,d}$.
For a vertex $v \in V(T_{k,d})$ let $\leaves(v)$ denote the size of the set $L(T^v_{k,d})$.
Note that $\leaves(v) = k^{d-h}$ where $h$ is the depth of~$v$.
Next, for a set $A \sub L(T_{k,d})$, we will write $\fr_A(v) = |A \cap L(T^v_{k,d})|\, /\, \leaves(v)$.
When $A$ is clear from the context, we will omit the subscript.

\begin{lemma}\label{lem:tree-one}
    Let $k, d, q, \tau \in \nn$ satisfy $k,q\ge 2$ and $d \ge \tau \ge 2k \cdot \log(q)$.
    Next, let $v \in V(T_{k,d})$ be of depth at most $d-\tau$ and $A \sub L(T_{k,d})$  satisfy $\fr_A(v) \ge \frac 1 q$. 
    Then there exists a vertex $u \in V(T^{v,\tau}_{k,d})$ such that for each $y \in \children(u)$ it holds that $\fr_A(y) \ge \frac 1 {2q}$. 
\end{lemma}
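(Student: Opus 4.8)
The plan is to prove \Cref{lem:tree-one} by a greedy descent from $v$, at each step moving to the child whose subtree retains the largest $A$-fraction, and to show that this descent cannot continue for more than $\tau$ steps before it reaches a vertex all of whose children have $A$-fraction at least $\tfrac{1}{2q}$. The key quantitative idea is that each time we are forced to descend (because some child has fraction below $\tfrac{1}{2q}$), the $A$-fraction of the vertex we move to must jump up by a multiplicative factor bounded away from $1$; since the fraction can never exceed $1$, only $O(k\log q)$ such jumps are possible.

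Concretely, I would set $v_0 = v$ and, as long as $v_i$ has some child with $A$-fraction less than $\tfrac{1}{2q}$, define $v_{i+1}$ to be a child of $v_i$ of maximum $A$-fraction. The averaging identity $\fr_A(v_i) = \tfrac1k \sum_{y \in \children(v_i)} \fr_A(y)$ is the engine: if one child contributes less than $\tfrac{1}{2q}$ to this average while $\fr_A(v_i) \ge \tfrac{1}{q}$ (which we must verify is maintained along the path — it is, since $\fr_A(v_{i+1}) \ge \fr_A(v_i)$ by the maximum-child choice and the bound below), then the remaining $k-1$ children have average fraction strictly larger than $\fr_A(v_i)$, and in fact
\[
\fr_A(v_{i+1}) \;\ge\; \frac{k\,\fr_A(v_i) - \tfrac{1}{2q}}{k-1} \;=\; \fr_A(v_i) + \frac{\fr_A(v_i) - \tfrac{1}{2q}}{k-1} \;\ge\; \fr_A(v_i)\left(1 + \frac{1}{2(k-1)}\right),
\]
using $\fr_A(v_i) - \tfrac{1}{2q} \ge \tfrac{1}{2q} \ge \tfrac12 \fr_A(v_i)$ in the last step. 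So each forced descent multiplies the fraction by at least $1 + \tfrac{1}{2(k-1)} \ge 1 + \tfrac{1}{2k}$. Starting from $\fr_A(v_0) \ge \tfrac1q$ and capping at $1$, the number of descents $m$ satisfies $\tfrac1q (1 + \tfrac{1}{2k})^m \le 1$, i.e. $m \le \log(q) / \log(1 + \tfrac{1}{2k}) \le 2k\log q$ using $\log(1+x) \ge x/2$ for $x \le 1$. Hence within $2k\log q \le \tau$ steps the descent terminates at a vertex $u = v_m$ with every child of $A$-fraction at least $\tfrac{1}{2q}$, and $u$ lies in $T^{v,\tau}_{k,d}$; the depth bound on $v$ ensures $u$ is not a leaf, so it genuinely has children throughout the process.

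The main obstacle — really the only place needing care — is the bookkeeping that $\fr_A(v_i) \ge \tfrac1q$ is preserved along the whole path, so that the multiplicative estimate is always applicable: this follows because the maximum-child rule gives $\fr_A(v_{i+1}) \ge \fr_A(v_i)$ unconditionally (the average of the children's fractions equals the parent's, so the max is at least the parent's), hence the sequence $\fr_A(v_i)$ is nondecreasing and stays $\ge \tfrac1q$. One should also double-check the constant: the problem statement allows $\tau \ge 2k\log q$, and the bound above gives termination after at most $2k\log q$ forced descents, which fits. I would present this as a clean induction/while-loop argument, invoking only the averaging identity for $\fr_A$ and the elementary inequality $\log(1+x)\ge x/2$ on $(0,1]$, with no probabilistic ingredients needed here.
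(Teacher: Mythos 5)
Your proposal is correct and takes essentially the same route as the paper's proof: a greedy descent to the child of maximum $A$-fraction, the averaging identity showing each forced step multiplies the fraction by at least $1+\frac{1}{2(k-1)}$, and termination within $\tau$ steps because the fraction cannot exceed $1$ (the paper merely phrases this as a proof by contradiction with the explicit invariant $\fr_A(v_i)\ge \frac{1}{q}\br{1+\frac{1}{2k-2}}^{i-1}$). Two cosmetic slips to fix when writing it up: the middle link of your chain $\fr_A(v_i)-\frac{1}{2q}\ge\frac{1}{2q}\ge\frac{1}{2}\fr_A(v_i)$ fails once $\fr_A(v_i)>\frac{1}{q}$, so derive $\fr_A(v_i)-\frac{1}{2q}\ge\frac{1}{2}\fr_A(v_i)$ directly from $\fr_A(v_i)\ge\frac{1}{q}$; and $\log(1+x)\ge x/2$ only yields $m\le 4k\log q$, so instead use $(1+\frac{1}{2k})^{2k}\ge 2$ (or retain the sharper factor $1+\frac{1}{2(k-1)}$, as the paper does) to land within $\tau\ge 2k\log q$.
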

\begin{proof}
    Suppose the claim does not hold.
    We will show that under this assumption for each $i \in [\tau]$ there exists $v_i \in V(T^{v,i}_{k,d})$ with $\fr(v_i) \ge \frac 1 q \cdot (1 + \frac 1 {2k-2})^{i-1}$.
    Then by substituting $i = \tau > (2k-2) \cdot \log(q)$ and estimating $(1 + \frac 1 m)^m >2$ (for all $m \ge 2$) we will arrive at a contradiction: \[ \fr(v_{\tau}) \ge \frac 1 q \cdot \br{1 + \frac 1 {2k-2}}^{(2k-2) \cdot \log(q)} > 
    \frac 1 q \cdot 2^{\log(q)} \ge 1.\]

    We now construct the promised sequence $(v_i)$ inductively.
    For $i = 1$ we set $v_1 = v$ which obviously belongs to $T^{v,1}_{k,d}$ and satisfies $\fr(v) \ge \frac 1 q$.
    To identify $v_{i+1}$ we consider $\children(v_i) = u_1, u_2, \dots, u_k$.
    We have $\fr(v_i) = \frac 1 k \cdot \sum_{j=1}^k \fr(u_j)$.
    We define $v_{i+1}$ as the child of $v_{i}$ that maximizes the value of $\fr$ (see \Cref{fig:tree}).
    By the assumption, one of the children satisfies $\fr(u_j) < \frac 1 {2q}$.
    Then $\fr(v_{i+1})$ is lower bounded by the average value of $\fr$ among the remaining $k-1$ children, which is at least
    $\frac 1 {k-1} \br{\fr(v_i)\cdot k - \frac 1 {2q}}$.
    We have $\fr(v_i) \ge \fr(v_1) \ge \frac 1 q$ so $\br{\fr(v_i)\cdot k - \frac 1 {2q}} \ge \br{\fr(v_i)\cdot k - \frac {\fr(v_i)} {2}}$.
    We check that $v_{i+1}$ meets the specification:
    \[\fr(v_{i+1}) \ge \frac{\fr(v_i)}{k-1}\cdot \br{k - \frac 1 2}  = \fr(v_i)\cdot \br{1 + \frac{1}{2k-2}} \ge \frac 1 q \cdot \br{1 + \frac 1 {2k-2}}^{i}\]
    In the last inequality we have plugged in the inductive assumption. The lemma follows.
\end{proof}

To apply \Cref{lem:tree-one} we need to identify many vertices satisfying $\fr_A(v) \ge \frac 1 {2q}$.
To this end, we will utilize the following simple fact.

\begin{lemma}\label{lem:mean}
    Let $a_1, a_2, \dots, a_\ell \in [0,1]$ be a sequence with mean at least $x$ for some $x \in [0,1]$.
    Then at least $\frac{x\ell}{2}$ elements in the sequence are lower bounded by $\frac x {2}$.
\end{lemma}
\begin{proof}
    Suppose that $|\{a_i \ge \frac x {2} \mid i \in [\ell]\}| < \frac{x\ell}{2}$.
    This leads to a contradiction: 
    \[\sum_{i=1}^\ell a_i < 1\cdot \frac{x\ell}{2} + \frac x {2} \cdot \ell 
    = {x\ell}.\]
\end{proof}


We will use the lemmas above for a fixed layer in the tree $T_{k,d}$ to identify  multiple vertices $v$ meeting the requirements of \Cref{lem:tree-one}.
For each such $v$ we can find a close descendant $u$ of $v$ for which we are likely to observe a collision. 
The value $\ell$ in \Cref{lem:scheme:assosiation}, governing the probability of a collision, corresponds to the number of leaves in the subtree of $u$, i.e., $\leaves(u)$.
Since this value appears in the exponent of the formula, we need a collection of such vertices $u$ in which the total sum of $\leaves(u)$ is large.

\begin{figure}[t]
\centering
\includegraphics[scale=1.05]{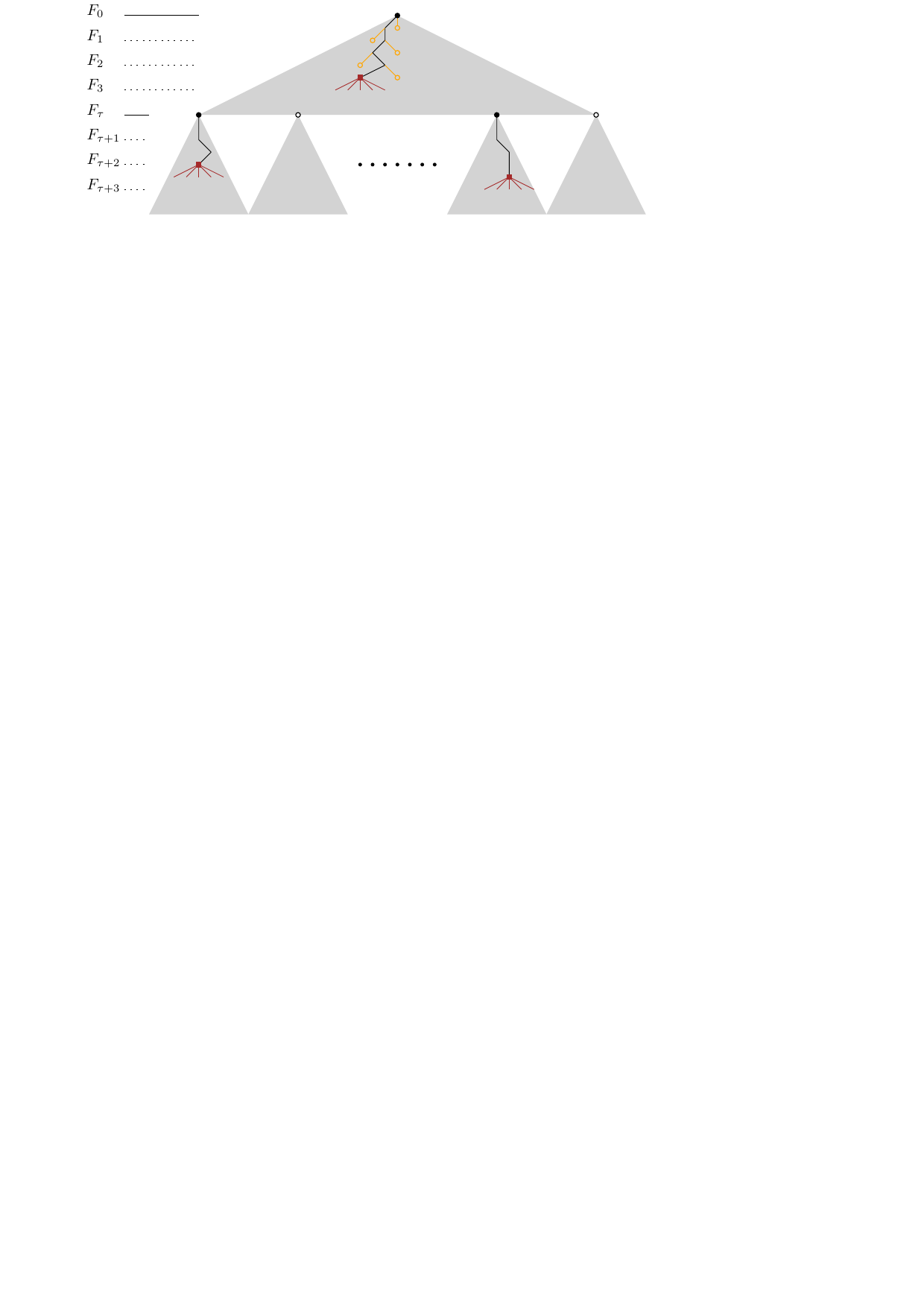}
\caption{An illustration for \Cref{lem:scheme:F}.
We consider layers $F_0, F_{\tau}, F_{2\tau},\, \dots$
The vertices from $F_0^+$ and $F_\tau^+$ are marked by black disks and their subtrees $F^{v,\tau}_{k,d}$ are depicted as gray triangles. 
For each vertex $v \in F^+$ we apply \Cref{lem:tree-one} to identify a vertex $\gamma(v) \in F$: the red square inside the corresponding triangle.
The root also illustrates the argument from \Cref{lem:tree-one}. 
We start with a vertex $v$ satisfying $\fr(v) \ge \frac 1 {2q}$ and while one of its children $v'$ has $\fr(v') < \frac 1 {4q}$ we can find another child $v''$ of $v$ with $\fr(v'') > \fr(v)$.
This process terminates within $\tau$ steps.
}\label{fig:tree}
\end{figure}

\begin{lemma}\label{lem:scheme:F}
    Let $k, d, q \in \nn$ satisfy $k,q \ge 2$, $d \ge 4kq$. 
    If $A \sub L(T_{k,d})$ is a $q$-subset 
    then there exists a set $F \sub V(T_{k,d})$ with the following properties.
    \begin{enumerate}
        \item For each $v \in F$ and $u \in \children(v)$ it holds that $\fr_A(u) \ge  \frac 1 {4q}$.
        \item The sum $\sum_{v\in F} \leaves(v)$ equals at least  ${d \cdot k^{d}} \cdot (4q)^{-3k \log(k)}$.
    \end{enumerate}
\end{lemma}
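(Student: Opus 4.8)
The strategy is to isolate a spaced-out collection of layers, apply Lemma~\ref{lem:mean} within each to harvest many vertices of high $A$-density, upgrade each of them to a vertex all of whose children have density at least $1/(4q)$ via Lemma~\ref{lem:tree-one}, and finally bound the total leaf-mass of the resulting family from below. Fix the $q$-subset $A$. Set $\tau = \lceil 2k\log q \rceil$ (so that the hypothesis $d \ge 4kq$ guarantees $d \ge \tau$ with room to spare, and also $d/\tau$ is not too small), and consider the layers $F_0, F_\tau, F_{2\tau}, \dots$, i.e.\ the layers of $T_{k,d}$ whose depth is a multiple of $\tau$ and at most $d - \tau$. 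There are roughly $d/\tau$ such layers.

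First I would show that in each such layer $F_{j\tau}$ many vertices have $\fr_A(v) \ge \frac{1}{2q}$. Indeed, the values $(\fr_A(v))_{v \in F_{j\tau}}$ lie in $[0,1]$ and their mean is exactly $\fr_A(\text{root}) \ge 1/q$, since $\fr_A$ averages over children. Applying Lemma~\ref{lem:mean} with $x = 1/q$ gives a subset $F_{j\tau}^+ \subseteq F_{j\tau}$ of size at least $\frac{1}{2q}|F_{j\tau}| = \frac{k^{j\tau}}{2q}$ on which $\fr_A(v) \ge \frac{1}{2q}$. Now for each $v \in F_{j\tau}^+$, since $\fr_A(v) \ge \frac{1}{2q} \ge \frac{1}{2q}$, Lemma~\ref{lem:tree-one} applied with parameter $q' = 2q$ (note $\tau \ge 2k\log q' $ needs checking — I would instead just take $\tau = \lceil 2k\log(2q)\rceil$ from the start, still $\le 4kq$ for $k,q\ge 2$) produces a vertex $\gamma(v) \in V(T^{v,\tau}_{k,d})$ such that every child $u$ of $\gamma(v)$ satisfies $\fr_A(u) \ge \frac{1}{2\cdot 2q} = \frac{1}{4q}$. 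The vertices $\gamma(v)$ for $v$ ranging over distinct $v \in F_{j\tau}^+$ are distinct (their subtrees $T^{v,\tau}$ are disjoint across the layer), and across different layers $j$ they are distinct because $\gamma(v)$ lies strictly between depth $j\tau$ and $(j+1)\tau$. Let $F$ be the union of all these $\gamma(v)$; property~(1) holds by construction.

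For property~(2), I would lower bound $\sum_{v \in F}\leaves(v)$. Each $\gamma(v)$ has depth between $j\tau$ and $j\tau + \tau$, hence $\leaves(\gamma(v)) = k^{d - \text{depth}(\gamma(v))} \ge k^{d - (j+1)\tau}$. Summing over the $|F_{j\tau}^+| \ge \frac{k^{j\tau}}{2q}$ vertices in layer $j$ gives a contribution of at least $\frac{k^{j\tau}}{2q} \cdot k^{d-(j+1)\tau} = \frac{k^{d-\tau}}{2q}$, which is independent of $j$. Since there are at least $\lfloor d/\tau \rfloor - 1 \ge \frac{d}{2\tau}$ usable layers (using $d \ge 4kq \ge 2\tau$), the total is at least $\frac{d}{2\tau}\cdot\frac{k^{d-\tau}}{2q} = \frac{d\,k^d}{4q\tau\,k^\tau}$. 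It remains to check $4q\tau\,k^\tau \le (4q)^{3k\log k}$: with $\tau \le 4kq$ we have $k^\tau \le k^{4kq}$, and $4q\tau \le 16kq^2$; a crude calculation bounding $k^{4kq}$ and the polynomial prefactor against $(4q)^{3k\log k}$ closes this, using $k,q \ge 2$ and $d \ge 4kq$ — I would present this as a short monotone estimate.

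\textbf{Main obstacle.} The routine-but-delicate part is the final arithmetic reconciling the exponent $(4q)^{-3k\log k}$ in the statement with what the construction naturally yields, namely a factor $\frac{1}{4q\tau k^\tau}$ with $\tau \approx 2k\log(2q)$: one must verify $k^\tau \le k^{2k\log(2q)+1} = (2q)^{2k\log k}\cdot k$ and then absorb all lower-order factors, which forces careful (but elementary) bookkeeping of constants and the choice of $\tau$. The conceptual care point is ensuring the $\gamma(v)$ are genuinely all distinct — both within a layer (disjoint height-$\tau$ subtrees) and between layers (depth strictly confined to $[j\tau, (j+1)\tau)$, with the layers spaced exactly $\tau$ apart) — since otherwise the leaf-mass sum would double-count.
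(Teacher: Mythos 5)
Your proposal follows essentially the same route as the paper: pick layers at depths $0,\tau,2\tau,\dots$ with $\tau=\lceil 2k\log(2q)\rceil$, harvest a $\frac{1}{2q}$-fraction of dense vertices per layer via Lemma~\ref{lem:mean}, upgrade each to $\gamma(v)$ via Lemma~\ref{lem:tree-one} with $q'=2q$, and sum the leaf-masses; the construction and the distinctness argument are correct. Two small points need fixing. First, your layer count $\lfloor d/\tau\rfloor-1\ge \frac{d}{2\tau}$ can fail (e.g.\ $d=2\tau+1$); there is no need to discard a layer, since all $M=\lfloor d/\tau\rfloor$ depths $0,\tau,\dots,(M-1)\tau$ are already at most $d-\tau$, and $\lfloor d/\tau\rfloor\ge \frac{d}{2\tau}$ holds whenever $d\ge\tau$. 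Second, the crude bound $k^\tau\le k^{4kq}$ is a dead end — $k^{4kq}$ is not dominated by $(4q)^{3k\log k}$ — and only the sharper estimate you name at the end, $k^\tau\le k\cdot(2q)^{2k\log k}\le k\cdot(4q)^{2k\log k}$, together with $4q\tau\le 16kq^2\le (4q)^{k\log k}\cdot k$ (using $k\log k\ge 2$), closes the arithmetic; this is exactly how the paper finishes.
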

\begin{proof}
    Let $F_i \sub V(T_{k,d})$ be $i$-th layer of $T_{k,d}$, i.e., the set of vertices of depth $i$; we have $|F_i| = k^{i}$ and $\leaves(v) = k^{d-i}$ for each $v \in F_i$.
    Since their subtrees are disjoint, we can see that 
    $\sum_{v\in F_i} |A \cap L(T^v_{k,d})| = |A|$.
    Therefore $\sum_{v\in F_i} \fr(v) / |F_i| \ge \frac 1 q$.
    By \cref{lem:mean} at least $\frac 1 {2q}$ fraction  of the vertices in $F_i$ must satisfy $\fr(v) \ge \frac 1 {2q}$.
    Let us denote this subset as $F^+_i$.

    Let $\tau = \lceil 2k \cdot \log (2q) \rceil$ and $M = \lfloor d / \tau \rfloor$. 
    We define $F^+ = F^+_0 \cup F^+_\tau \cup F^+_{2\tau} \cup \dots \cup F^+_{(M-1)\tau}$.
    Observe that for each pair $u, v \in F^+$ the trees $T^{u,\tau}_{k,d},\, T^{v,\tau}_{k,d}$ are disjoint.
    We apply \cref{lem:tree-one} with $q' = 2q$ to each $v \in F^+$ to obtain a vertex $\gamma(v) \in V(T^{v,\tau}_{k,d})$ satisfying condition (1).
    The disjointedness of these subtrees ensures that the vertices $\gamma(v)_{v \in F^+}$ are distinct.
    We define $F = \{\gamma(v) \mid v \in F^+\}$.

    Now we take care of condition (2).
    Let us fix $j \in [0, M-1]$.
    Since $\gamma(v) \in V(T^{v,\tau}_{k,d})$ for $v$ with the depth $j\tau$, we infer that the depth of $\gamma(v)$ is at most $(j+1)\tau-1$ so $|\leaves(\gamma(v))| \ge k^{d+1-(j+1)\tau}$.
    We have established already that $|F^+_{j\tau}| \ge \frac{|F_{j\tau}|}{2q} =  \frac{k^{j\tau}} {2q}$.
    The assumption $d \ge 4kq$ implies $d \ge \tau$ so we can simplify $M = \lfloor d / \tau \rfloor \ge d / (2\tau)$.
    We estimate the sum within each layer $F^+_{j\tau}$ and then multiply it by $M$.
    
    \[ \sum_{v\in F^+_{j\tau}} \leaves(\gamma(v)) \ge \frac{k^{j\tau}} {2q} \cdot k^{d+1-(j+1)\tau} = \frac{k^{d + 1 - \tau}} {2q}\]
    \[ \sum_{v\in F} \leaves(v) = \sum_{j = 0}^{M-1} \sum_{v\in F^+_{j\tau}} \leaves(\gamma(v)) \ge \frac{d \cdot k^{d + 1 - \tau}} {2\tau \cdot 2q} \]
    To get rid of the ceiling, we estimate $\tau \le 2k \cdot \log (4q)$. 
    Then $k^\tau \le k^{2k \log (4q)} = (4q)^{2k \log(k)}$.
    We also use a trivial bound $\tau \le 4kq$. 
    We can summarize the analysis by

    \[\sum_{v\in F} \leaves(v) \ge \frac{d \cdot k^{d + 1 - \tau}} {2\tau \cdot 2q} = \frac{d \cdot k^{d+1}} {k^{\tau} \cdot 4q\tau} \ge 
    \frac{d \cdot k^{d+1}} {(4q)^{2k \log(k)} \cdot 16kq^2} \ge \frac{d \cdot k^{d}} {(4q)^{3k \log(k)} } \] 
\end{proof}

Now we combine the gathered ingredients to show that a random scheme yields a high probability of a collision with any fixed $q$-subset. At this point we also adjust $d$ to be larger then the factors depending on $k$ and $q$. 

\begin{lemma}\label{lem:scheme:randomBob}
    Let $k,d,q \in \nn$ satisfy $d \ge k \cdot (4q)^{4k\log k}$.
    Consider some $q$-subset $A \sub L(T_{k,d})$.
    Suppose that we choose the scheme $\beta = (f_v)_{v \in V(T_{k,d})} \in \schemes(k,d)$ by picking each bijection $f_v \colon L(T^v_{k,d}) \to [|L(T^v_{k,d})|]$ uniformly and independently at random.
    Then the probability that $(A,\beta)$ has no collision is at most $\exp(-k^d)$.
\end{lemma}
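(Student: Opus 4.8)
The plan is to combine \Cref{lem:scheme:F} with \Cref{lem:scheme:assosiation}, the only remaining subtlety being that the collision events at different vertices of $F$ must be handled jointly, not merely by a naive union bound over $v \in F$ (which would lose the exponential gain). First I would invoke \Cref{lem:scheme:F}, whose hypothesis $d \ge 4kq$ is implied by our stronger bound $d \ge k\cdot(4q)^{4k\log k}$, to obtain a set $F \sub V(T_{k,d})$ such that every child $u$ of every $v \in F$ satisfies $\fr_A(u) \ge \frac1{4q}$, and $\sum_{v \in F}\leaves(v) \ge d\cdot k^d\cdot(4q)^{-3k\log k}$. For a fixed $v \in F$ with children $u_1,\dots,u_k$, set $\ell_v = \leaves(u_1) = \dots = \leaves(u_k)$, and let $X^v_i \sub [\ell_v]$ be the image of $A \cap L(T^{u_i}_{k,d})$ under the canonical identification of $L(T^{u_i}_{k,d})$ with $[\ell_v]$; then $|X^v_i| \ge \ell_v/(4q)$. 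The event that $v$ forms a collision with respect to $(A,\beta)$ is exactly the event $f_{u_1}(X^v_1) \cap \dots \cap f_{u_k}(X^v_k) \ne \emptyset$, since $f_{u_i}$ restricted to the leaf-labels is a permutation of $[\ell_v]$ and condition~(2) of a collision asks for a common image value. So by \Cref{lem:scheme:assosiation} with $z = 4q$, the probability that $v$ does \emph{not} form a collision is at most $\exp(-\ell_v/(4q)^k)$.

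The heart of the argument is that these non-collision events, over all $v \in F$, are mutually independent: each one depends only on the bijections $f_{u}$ for $u \in \children(v)$, and since the subtrees rooted at distinct members of $F$ are pairwise disjoint (this is part of the construction of $F$ in \Cref{lem:scheme:F}, where the $\gamma(v)$ live in disjoint subtrees $T^{v,\tau}_{k,d}$, hence their children are distinct nodes), the sets $\children(v)$ for $v \in F$ are pairwise disjoint, and the $f_u$'s are drawn independently. Therefore
\[
\prob{(A,\beta)\text{ has no collision}} \le \prob{\text{no }v\in F\text{ forms a collision}} = \prod_{v \in F}\prob{v\text{ forms no collision}} \le \prod_{v\in F}\exp\!\br{-\frac{\leaves(v)}{k\cdot(4q)^k}},
\]
where I have used $\ell_v = \leaves(u_i) = \leaves(v)/k$. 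Collapsing the product into a single exponential gives the bound $\exp\!\br{-\frac{1}{k(4q)^k}\sum_{v\in F}\leaves(v)}$.

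It then remains to plug in the lower bound on $\sum_{v\in F}\leaves(v)$ and check that the exponent is at least $k^d$:
\[
\frac{1}{k\cdot(4q)^k}\cdot\frac{d\cdot k^d}{(4q)^{3k\log k}} \ge k^d \quad\Longleftrightarrow\quad d \ge k\cdot(4q)^k\cdot(4q)^{3k\log k},
\]
and since $k \ge 2$ forces $(4q)^k \le (4q)^{k\log k}$ (as $\log k \ge 1$... here I should be careful and instead note $k \le k\log k$ fails for $k=2$, so I would more safely bound $(4q)^k \cdot (4q)^{3k\log k} = (4q)^{k(1+3\log k)} \le (4q)^{4k\log k}$ using $1 + 3\log_2 k \le 4\log_2 k \iff \log_2 k \ge 1 \iff k \ge 2$), the assumed inequality $d \ge k\cdot(4q)^{4k\log k}$ suffices. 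Hence $\prob{(A,\beta)\text{ has no collision}} \le \exp(-k^d)$, as claimed.

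The main obstacle I anticipate is making the independence step fully rigorous: one must verify carefully that \Cref{lem:scheme:F} really delivers a set $F$ whose members have pairwise disjoint children-sets (it does, because the $\gamma(v)$ are taken inside the pairwise-disjoint truncated subtrees $T^{v,\tau}_{k,d}$), and that the collision event at $v$ is measurable with respect to only $\{f_u : u \in \children(v)\}$ — a direct consequence of the definition of collision, which references only the bijections at the children of $v$. Everything else is bookkeeping with the exponent, for which the slack in the $(4q)^{4k\log k}$ bound is comfortably sufficient.
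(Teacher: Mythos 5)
Your proposal is correct and follows essentially the same route as the paper: invoke \Cref{lem:scheme:F} to get $F$, apply \Cref{lem:scheme:assosiation} with $z=4q$ to each $v\in F$ (viewing each $f_{u_i}$ as a fixed bijection composed with a uniform permutation), use independence of the collision events across $F$ coming from the disjointness of the children-sets, and multiply. Your exponent bookkeeping, including the observation that $(4q)^{k(1+3\log k)} \le (4q)^{4k\log k}$ for $k\ge 2$, matches the paper's calculation.
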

\begin{proof}
    We apply \Cref{lem:scheme:F} and use the obtained set $F \sub V(T_{k,d})$ to analyze the probability of getting a collision.
    Consider $u \in F$ with $\children(u) = \{u_1,\dots,u_k\}$ and let $C_u$ denote the event that $(A,\beta)$ has a collision at $u$.
    For each $i \in [k]$ we have $\leaves(u_i) = \leaves(u)/k$ and we know from \Cref{lem:scheme:F}(1) that $\fr_A(u_i) \ge 1/(4q)$. 
    For each $i \in [k]$ a random bijection $f_{u_i}$ is chosen between $L(T^{u_i}_{k,d})$ and $[\leaves(u_i)]$.
    This can be interpreted as first picking an arbitrary bijection to $[\leaves(u_i)]$ and then combining it with a random permutation over $[\leaves(u_i)]$.
    We apply \Cref{lem:scheme:assosiation} with $z = 4q$ to infer that the probability of getting no collision at $u$ is upper bounded by
    \[
    \prob{\neg C_u} \le \exp\br{\frac {-\leaves(u_i)} {z^k}} = \exp\br{\frac{-\leaves(u)}{k\cdot(4q)^k}}. \]
    Since the sets $(\children(u))_{u \in F}$ are pairwise disjoint, the corresponding events $C_u$ are independent.
    We can thus upper bound the probability of getting no collision at all by the product $\prod_{u\in F} \prob{\neg C_u}$.
    Next, by \Cref{lem:scheme:F}(2) and the assumption on $d$ we know that
    \[ \sum_{u \in F} \leaves(u) \ge d\cdot k^{d} \cdot (4q)^{-3k \log k} \ge k^{d+1} \cdot (4q)^k .\]
    We combine this with the previous formula to obtain
    \[
    \prob{\neg \bigcup_{u\in F} C_u} = \prob{\bigcap_{u\in F} \neg C_u} = \prod_{u\in F} \prob{\neg C_u} \le \exp\br{\frac{-\sum_{u\in F} \leaves(u)}{k\cdot(4q)^k}} \le  \exp(-k^d). 
    \]
\end{proof}

We are ready to prove \Cref{thm:game:bob} (restated below) and thus finish the proof of the reduction.

\thmBob*
\begin{proof}
    We choose the scheme $\beta$ by picking each bijection uniformly and independently at random.
    For a fixed $q$-subset $A$ let $C_A$ denote the event that $(A,\beta)$ witnesses a collision. 
    In these terms, \Cref{lem:scheme:randomBob}
    says that $\prob{\neg C_A} \le \exp(-k^d)$. 
    Let $\mathcal{A}$ be the family of all $q$-subsets $A \sub L(T_{k,d})$; we have $|\mathcal{A}| \le 2^{k^d}$.
    By the union bound, the probability that there exists a $q$-subset with no collision with $\beta$ is 
    \[ \prob{\bigcup_{A \in \mathcal{A}} \neg C_A} \le \sum_{A \in \mathcal{A}} \prob{\neg C_A} \le 2^{k^d} \cdot (1/e)^{k^d} < 1.\]
    Consequently, there is a positive probability of choosing a scheme $\beta$ having a collision with every $q$-subset.
    In particular, this means that such a scheme exists.
\end{proof}

\section{Conclusion}
We have shown that no FPT algorithm can achieve an $\Oh(1)$-approximation for {\sc Max Disjoint Paths} on acyclic digraphs.
However, our reduction blows up the parameter significantly so it does not preserve a running time of the form $f(k)n^{o(k)}$.
It is known that such a running time is unlikely for the exact variant of the problem~\cite{Chitnis23}.
This leads to a question whether {\sc Max} \dagdp admits an $\Oh(1)$-approximation that is faster than $n^{\Oh(k)}$.

Our proof yields an alternative technique for gap amplification in a parameterized reduction based on the probabilistic method (extending the restricted version appearing in~\cite{steiner-orient}), compared to reductions relying on coding theory~\cite{Lin21, guruswami2023parameterized} or communication complexity~\cite{dominating-set}.
Can this approach come in useful for proving that Parameterized Inapproximability Hypothesis (PIH) follows from FPT$\ne$W[1]?

\bibliographystyle{plainurl}
\bibliography{main}

\end{document}